\documentclass[10pt]{article} % For LaTeX2e
\usepackage[accepted]{tmlr}
% If accepted, instead use the following line for the camera-ready submission:
%\usepackage[accepted]{tmlr}
% To de-anonymize and remove mentions to TMLR (for example for posting to preprint servers), instead use the following:
%\usepackage[preprint]{tmlr}

% Optional math commands from https://github.com/goodfeli/dlbook_notation.
%%%%% NEW MATH DEFINITIONS %%%%%

\usepackage{amsmath,amsfonts,bm}

% Mark sections of captions for referring to divisions of figures

% Highlight a newly defined term

% Figure reference, lower-case.

% Figure reference, capital. For start of sentence

% Section reference, lower-case.

% Section reference, capital.

% Reference to two sections.

% Reference to three sections.

% Reference to an equation, lower-case.
\def\eqref#1{equation~\ref{#1}}
% Reference to an equation, upper case

% A raw reference to an equation---avoid using if possible

% Reference to a chapter, lower-case.

% Reference to an equation, upper case.

% Reference to a range of chapters

% Reference to an algorithm, lower-case.

% Reference to an algorithm, upper case.

% Reference to a part, lower case

% Reference to a part, upper case

\def\1{\bm{1}}

% Random variables

% rm is already a command, just don't name any random variables m

% Random vectors

% Elements of random vectors

% Random matrices

% Elements of random matrices

% Vectors

% Elements of vectors

% Matrix

% Tensor
\DeclareMathAlphabet{\mathsfit}{\encodingdefault}{\sfdefault}{m}{sl}
\SetMathAlphabet{\mathsfit}{bold}{\encodingdefault}{\sfdefault}{bx}{n}

% Graph

% Sets

% Don't use a set called E, because this would be the same as our symbol
% for expectation.

% Entries of a matrix

% entries of a tensor
% Same font as tensor, without \bm wrapper

% The true underlying data generating distribution

% The empirical distribution defined by the training set

% The model distribution

% Stochastic autoencoder distributions

 % Laplace distribution

\newcommand{\R}{\mathbb{R}}

% Wolfram Mathworld says $L^2$ is for function spaces and $\ell^2$ is for vectors
% But then they seem to use $L^2$ for vectors throughout the site, and so does
% wikipedia.

 % See usage in notation.tex. Chosen to match Daphne's book.

\DeclareMathOperator*{\argmax}{arg\,max}

\usepackage{booktabs}
\usepackage[colorlinks = true,
            linkcolor = blue,
            urlcolor  = blue,
            citecolor = blue,
            anchorcolor = blue]{hyperref}
\usepackage{url}

\usepackage{algorithm}
\usepackage[noend]{algpseudocode}
\usepackage{adjustbox}

\usepackage{graphicx}
\usepackage{subcaption}
\usepackage{amsmath}
\usepackage{amssymb}
\usepackage{mathtools}
\usepackage{amsthm}
\usepackage{thmtools, thm-restate}
\usepackage{multirow}
\usepackage{wrapfig}
\theoremstyle{plain}
\newtheorem{theorem}{Theorem}[section]

\newtheorem{lemma}[theorem]{Lemma}

\theoremstyle{definition}
\newtheorem{definition}[theorem]{Definition}
\newtheorem{assume}[theorem]{Assumption}
\theoremstyle{remark}

\newcommand{\mypar}[1]{\smallskip\noindent{\textbf{{#1}:}}}
\newcommand\norm[1]{\left\lVert#1\right\rVert}

\usepackage[textsize=tiny]{todonotes}

\title{Preserving Expert-Level Privacy in Offline Reinforcement Learning}

\author{\name Navodita Sharma\textsuperscript{*} \email navoditasharma@google.com \\
      \addr Google DeepMind
      \AND
      \name Vishnu Vinod\textsuperscript{*} \email vishnuvinod2001@gmail.com \\
      \addr CeRAI, IIT Madras
      \AND
      \name Abhradeep Thakurta \email athakurta@google.com\\
      \addr Google DeepMind
      \AND
      \name Alekh Agarwal \email alekhagarwal@google.com\\
      \addr Google Research
      \AND
      \name Borja Balle \email bballe@google.com\\
      \addr Google DeepMind
      \AND
      \name Christoph Dann \email chrisdann@google.com\\
      \addr Google Research
      \AND
      \name Aravindan Raghuveer \email araghuveer@google.com\\
      \addr Google DeepMind
      }

% The \author macro works with any number of authors. Use \AND 
% to separate the names and addresses of multiple authors.

  % Insert correct month for camera-ready version
 % Insert correct year for camera-ready version
 % Insert correct link to OpenReview for camera-ready version

\begin{document}

\maketitle
\def\thefootnote{*}\footnotetext{These authors contributed equally.}
\def\thefootnote{\arabic{footnote}}

\begin{abstract}
The offline reinforcement learning (RL) problem aims to learn an optimal policy from historical data collected by one or more behavioural policies (experts) by interacting with an environment. However, the individual experts may be privacy-sensitive in that the learnt policy may retain information about their precise choices. In some domains like personalized retrieval, advertising and healthcare, the expert choices are considered sensitive data.  To provably protect the privacy of such experts, we propose a novel consensus-based expert-level differentially private offline RL training approach compatible with any existing offline RL algorithm. We prove rigorous differential privacy guarantees, while maintaining strong empirical performance. Unlike existing work in differentially private RL, we supplement the theory with proof-of-concept experiments on classic RL environments featuring large continuous state spaces, demonstrating substantial improvements over a natural baseline across multiple tasks.
\end{abstract}

\section{INTRODUCTION}
\label{sec:intro}

Leveraging existing offline datasets to learn high-quality decision policies via offline Reinforcement Learning (RL) is a critical requirement in many settings ranging from robotics~\citep{fu2020d4rl, levine2020offline} and recommendation systems~\citep{bottou2013counterfactual, ie2019reinforcement, cai2017real} to healthcare applications~\citep{oberst2019counterfactual, tang2022leveraging}. Correspondingly, there is now a rich literature on effective techniques~\citep{fujimoto2019off, kumar2019stabilizing, kumar2020conservative, cheng2022adversarially} for learning from such offline datasets, collected using one or more \emph{behavioural policies}. An often overlooked aspect of this literature is, however, that in privacy sensitive domains such as personalized retrieval, advertising, and healthcare, the behavioural policies might reveal private information about the preferences or strategies used by the corresponding experts (users, advertisers, health care providers etc.)  whose decisions underlie the offline data. In such scenarios, we ideally seek to uncover the broadly beneficial strategies employed in making decisions by a large cohort of experts whose demonstrations the offline data is collected with, while not leaking the private information of any given expert. We call this novel setting \emph{offline RL with expert-level privacy}, and design algorithms with strong privacy guarantees for it.

There has been a growing interest in differentially private RL, both in online~\citep{wang2019privacy, qiao2023near,zhou2022dprl,liao2023local} and offline~\citep{qiao2024offline} settings. The offline RL literature, which is most relevant to this work, primarily focuses on protecting privacy at the \textit{trajectory-level}. However, when each expert contributes multiple trajectories to the demonstration dataset, private information can still be leaked under trajectory-level differential privacy. Protecting the privacy of an expert in this scenario requires significantly more careful techniques.
Secondly, the prior works mostly focus on tabular or linear settings~\citep{qiao2023near,qiao2024offline,zhou2022dprl}, where the offline RL problem can be effectively solved using count or linear regression based techniques, and is hence amenable to existing techniques for privatizing counts or linear regression models. In contrast, we make no assumptions on the size of the state space, the parameterization of the learned policy or the policy underlying expert demonstrations. For a more detailed discussion of related work, we refer the reader to Section~\ref{sec:related}.

\mypar{Contributions} With this context, our paper makes the following contributions:

\begin{itemize}
    \item We identify and formalize the problem of offline RL with expert-level privacy and motivate it with practical examples. We also explain the inadequacy of prior algorithms in this setting.
    
    \item We provide \textit{practical algorithms} for offline RL with strong expert-level privacy guarantees. In contrast with prior approaches which use noisy statistics of the data in offline RL~\citep{qiao2024offline} or rely on very strict assumptions like tabular settings or linear function approximators~\citep{wang2019privacy}, limiting real-world applicability, we remove these constraints, supporting any gradient-based offline RL algorithm with general function approximators and continuous state spaces.
    
    \item DP-SGD \citep{abadi2016dp} is the go-to method today for privately training gradient-based general function approximators.  We adapt DP-SGD for expert-level privacy as a natural baseline in our problem setting.
    Our algorithm identifies a subset of the data which can be used in learning without any added noise, and significantly improves upon the DP-SGD baseline since it \textit{selectively adds noise to gradients only where it is needed}, unlike DP-SGD which uniformly noises all gradient updates.
    We evaluate our algorithm on standard RL benchmarks, and discuss the results in Section \ref{sec:expt}.
\end{itemize}

\begin{figure*}
    \centering
    \includegraphics[trim={11.5em 54em 6em 9.5em},clip,width=0.75\linewidth]{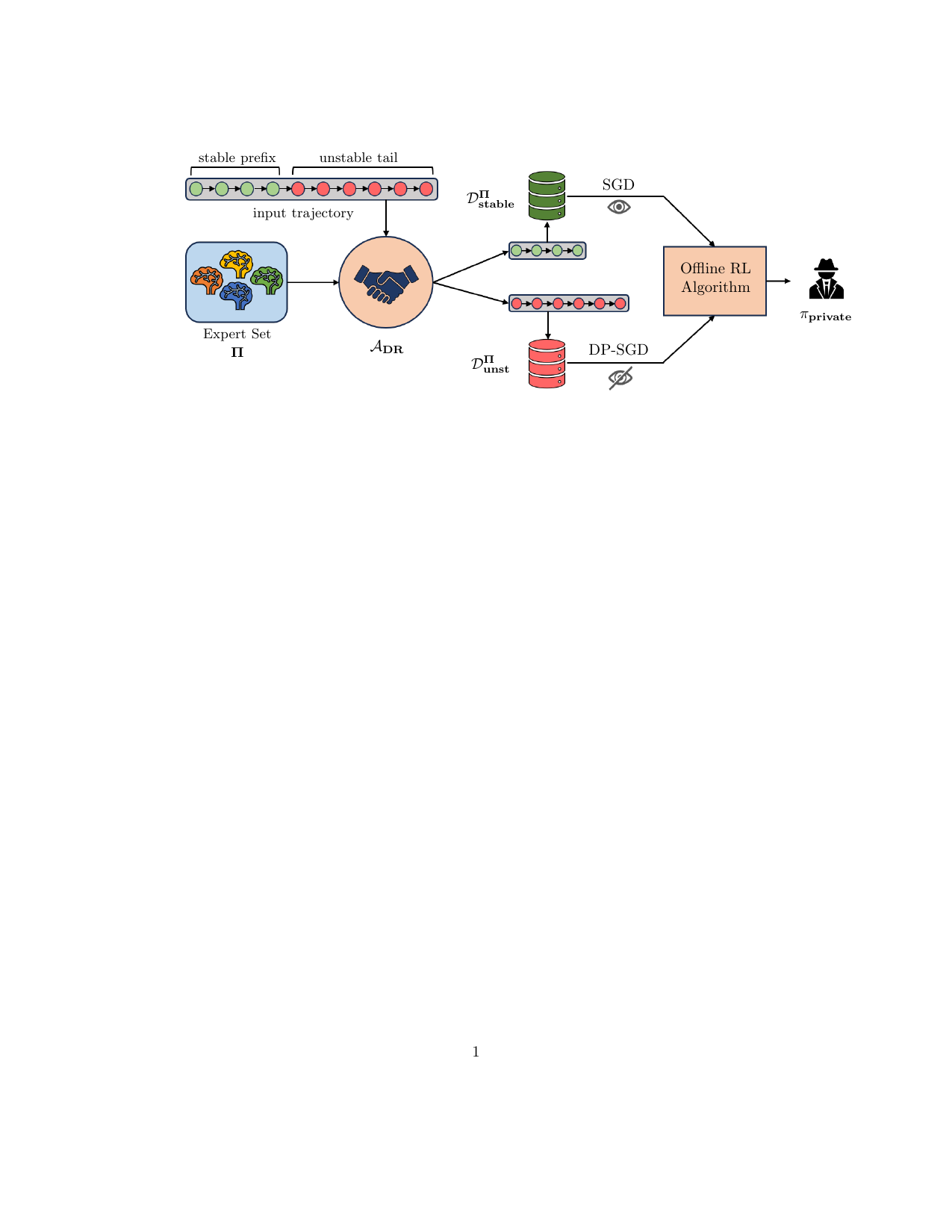}
    \caption{Training pipeline for Expert-level Differentially Private Offline RL given an expert set $\Pi = \{\pi_1, \pi_2, \dots, \pi_m\}$ and input trajectories logged into an offline dataset $D_\Pi$. $\mathcal{A}_{DR}$ (Algorithm \ref{alg:data_release}) splits input trajectories and adds stable prefixes to $D^\Pi_{stable}$ discarding the rest to $D^\Pi_{unst}$. These are used to train any off-the-shelf offline RL algorithm (see Algorithm \ref{alg:selective_dpsgd}) and learn an expert-level differentially policy $\pi_{private}$.}
    \label{fig:main}
\end{figure*}

\mypar{Our Algorithm} Our algorithm proceeds in two stages; see Figure \ref{fig:main}. The first is a data filtering stage that produces a set of trajectory prefixes (denoted $D^{\Pi}_{stable}$), using a variant of the Sparse Vector Technique (SVT) for privacy accounting \citep{dwork2014dp}, that can be used in training without any further noise addition, in a privacy-preserving manner. The intuition is that members of this set are likely to occur across many experts, and hence using it in training does not violate the privacy of an individual expert. The second stage involves selectively running DP-Stochastic Gradient Descent (DP-SGD)\footnote{DP-SGD is to an SGD based approach where gradients coming from individual experts are clipped, and appropriate Gaussian noise is added to ensure DP.}~\citep{song2013sgd,bassily2014private,abadi2016dp}, adapted for expert-level privacy on the remainder of the trajectories,  $D^{\Pi}\setminus D^{\Pi}_{stable}$. Our empirical results strongly show that combining both stages of the algorithm described above, performs better than either of them individually. 
The only assumption we make on the underlying offline RL algorithm is that it is gradient-based, so that we can use DP-SGD to privatize its updates.

\mypar{Motivation}
We motivate our problem setting using the case of financial portfolio management. Different fund managers employ proprietary and privacy-sensitive investment strategies that guide their trading decisions. In such a setting, treating the algorithms used by individual fund managers as expert policies and learning an expert-level private investment strategy (policy) using offline RL will be beneficial as a starting strategy for new investors and other parties that do not have specialized investment strategies.

Alternatively, in the case of political campaign optimization in a multi-party democracy, large political parties may use proprietary algorithms to maximize their reach, whereas independent candidates, often lacking the same resources, are placed at a  disadvantage. Access to a privately learnt aggregate campaigning strategy will allow such players to compete on a somewhat equal footing, allowing voters greater effective choice.

At a high-level, our setting is similar to the `warm-start' problem in optimization~\citep{sambharya2024warmstart}, translated to an offline RL setting where data is privacy-sensitive and cannot be directly trained on.

\mypar{Note on Experts}
Throughout the paper, we use the term ``expert'' to refer to RL agents,  policies, or algorithms used in various real-world settings. They do not refer to \textit{human experts} like doctors, trading agents etc. We learn an expert-level DP policy, from the offline trajectories logged by such agents interacting with the environment like behavioural policies in offline RL~\citep{levine2020offline, prudencio2023offline}.

\section{RELATED WORK}
\label{sec:related}
\mypar{Comparison to Prior Work on Private Offline RL} There have been recent works  exploring the intersection of differential privacy (DP) and reinforcement learning (RL)~\citep{wang2019privacy,qiao2023near,qiao2024offline,zhou2022dprl,liao2023local}.~\cite{qiao2023near,qiao2024offline,zhou2022dprl, wang2019privacy} explore the problem either in the tabular setting  or in the linear function approximation setting. In either of the cases there is a sufficient statistic (like discrete state-action visit counts or feature covariances to privatize sensitive data) which if privatized, would suffice to privatize the whole algorithm. Specifically,~\cite{wang2019privacy} explores the use of differential privacy (DP) in continuous spaces to only protect the value function approximator.~\cite{liao2023local} explores RL with linear function approximation under local differential privacy guarantee, a stronger notion of privacy where users contributing data do not trust the central aggregator. Meanwhile,  \cite{qiao2023near, qiao2024offline, vietri2020private} work with tabular MDP settings and discrete state-action spaces, often utilizing some form of state-action visit counts to privatize sensitive data. In our work, we extend existing approaches beyond the state-space and linear function approximator constraints and propose a general algorithm which can be used with \emph{continuous state spaces}, and with general functional approximation for RL tasks.

To the best of our knowledge, expert-level DP has not been dealt with in the context of RL so far. Prior works often limit themselves to exploring trajectory-level privacy, which has finer granularity of privacy protection which makes the  setting arguably ``easier".  In this work, we adopt an expert-level notion of neighbourhood when a large number of behavioural policies (or experts) are used for data collection. We further note that expert-level privacy differs from existing privacy notions like Joint DP~\citep{chowdhury2022dpregret}, or other notions which aim to protect the transitions and rewards of the MDP.

\mypar{Background on DP-SGD and Sparse-Vector Technique} DP-SGD first introduced in \cite{song2013sgd,bassily2014private}, and adapted for deep learning by \cite{abadi2016dp}, proposed gradient noise injection to guarantee differentially private model training. Several variants of DP-SGD have been proposed such as \cite{mcmahan2017dpfed}, which allows user-level DP training in federated learning scenarios and \cite{kairouz2021dpftrl} a private variant of the Follow-The-Regularized-Leader (DP-FTRL) algorithm that compares favourably with amplified DP-SGD~\citep{abadi2016dp, balle2018amplify}.
 
The Sparse-Vector Technique~\citep{roth2010interactive,hardt2010multiplicative,dwork2014dp}, developed over a sequence of works, allows one to carefully track the privacy budget across queries, and provide tighter data-dependent privacy guarantees as opposed to vanilla composition~\citep{dwork2014dp}. At a high-level, the idea is to identify ``stable queries'', which have low local sensitivity and do not change their answers when moving to a neighboring dataset. Such queries can be answered without paying any additional privacy cost. In Algorithm~\ref{alg:data_release}, we use a modification of this idea to output parts of the expert trajectories that are ``stable'' without incurring any additional privacy cost (in $\varepsilon$).

\mypar{Relationship with Offline RL Literature} Our approach is agnostic to the underlying offline RL algorithm used and can be combined with any off-the-shelf model-based~\citep{kidambi2020morel, yu2020mopo, yu2021combo} or model-free~\citep{kumar2019stabilizing, kumar2020conservative, cheng2022adversarially} offline RL algorithms which use gradient updates to learn an optimal policy. We focus on model-free approaches in our experiments.

\section{PRELIMINARIES}
\label{sec:prelims}

\subsection{Offline RL}
\label{sec:offlinerl}
Offline RL~\citep{levine2020offline, prudencio2023offline} is a data-driven approach to RL that aims to circumvent the prohibitive cost of interactive data collection in many real-world scenarios. Under this paradigm, a static dataset collected by some (possibly sub-optimal) behaviour policy $\pi_\beta$ is used to learn a policy $\pi$ without any interaction with the environment. 

Let the environment be modeled by a Markov Decision Process (MDP) $\mathcal{M} = \langle S, A, r, P, \rho, \gamma \rangle$, where $S$ is the state space, $A$ is the action space, $r:S\times A \rightarrow \R$ is the reward function, $P: S \times A \to \Delta(S)$ is the transition kernel ($\Delta(\mathcal{X})$ is the set of all probability distributions over $\mathcal{X}$), $\rho \in \Delta(S)$ is the initial state distribution  and $\gamma \in [0, 1]$ is the discount factor. 

\begin{definition}[Offline RL]
    \label{def:offlinerl}
    Given an environment represented by the MDP $\mathcal{M}=\langle S, A, r, P, \rho, \gamma \rangle$ defined above and a dataset $D = \{(s_i,a_i,r_i,s_i')_i\}_{i=1}^{n}$ generated with $(s_i,a_i) \sim \mu$ for some state-action distribution $\mu$, $r_i = r(s_i,a_i)$ and $s_i'\sim P(\cdot | s_i,a_i)$, the objective is to find an optimal policy $\pi^*: S\rightarrow \Delta(A)$, defined as
    %\begin{equation}
    $$\pi^* = \underset{\pi}{\argmax} \;\mathbb{E}^\pi_{s_1\sim \rho}[R_\tau]$$, where $R_\tau = \sum_{t=1}^{|\tau|} \gamma^t r_t$ is the discounted return of trajectory $\tau$, generated from $\pi$ and $\mathcal{M}$ as: $s_1\sim \rho$, $a_t \sim \pi(.|s_t)$, $r_t = r(s_t, a_t)$ and $s_{t+1} \sim P(.|s_t, a_t)$ for $t\geq 1$ upto $|\tau|$ timesteps.
\end{definition} 

A key concern in offline RL is that the state-action distribution encountered in the offline data to learn a policy $\pi$ might significantly differ from that encountered upon actually executing $\pi$ in the MDP. There is a host of existing offline RL methods that address this challenge in different ways ~\citep{kumar2019stabilizing, fujimoto2019off, kumar2020conservative, cheng2022adversarially}. We use BCQ~\citep{fujimoto2019off} and CQL~\citep{kumar2020conservative} as our offline RL methods due to their strong performance across a variety of tasks in prior evaluations and popularity in existing offline RL literature.. 

We consider a scenario where the dataset contains demonstrations collected from a number of different behavioural policies or experts with varying degrees of optimality. We denote the set of behaviour policies by $\Pi = \{\pi_1, \dots, \pi_m\}$. Imposing the constrain of expert-level privacy can conflict with the rewards attainable by the learned policy. For example, if most experts are similar but a single, high-reward expert diverges from them, an inherent tradeoff arises between expert-level privacy and utility. We only consider the privatization of the expert policies (the chosen actions). Thus, information about the environmnent dynamics such as the transition kernel and reward function are not protected under our proposed algorithm. 

\subsection{Differential Privacy}
\label{sec:privacy}

Differential privacy (DP)~\citep{dwork2006dp, dwork2014dp} quantifies the loss of privacy associated with data release from any statistical database. Informally, a procedure satisfies DP if the distribution of its outputs on two datasets which differ only in one record is very similar. We use approximate-DP (or $(\varepsilon, \delta)$-DP) throughout our work. This notion is formally defined below.

\begin{definition}[$(\varepsilon,\delta)$-DP]\label{def:eps}
    For positive real numbers $\varepsilon>0$ and $0\leq\delta\leq 1$, an algorithm $\mathcal{A}$ is $(\varepsilon, \delta)$-DP iff, for any two neighbouring datasets $D, D'\in \mathcal{D}^*$ ($\mathcal{D}$ is the space of all data records), and $S\subset Range(\mathcal{A})$: \begin{equation}
        Pr(\mathcal{A}(D) \in S) \leq e^{\varepsilon} Pr(\mathcal{A}(D') \in S) + \delta
    \end{equation} where $Range(\mathcal{A})$ is the set of all possible outputs of $\mathcal{A}$.
\end{definition}

Definition \ref{def:eps} guarantees that the probability of seeing a specific output on any two neighbouring datasets can differ at most by a factor of $e^{\varepsilon}$, with a relaxation of $\delta$. In most ML applications reasonable privacy guarantees have $\varepsilon\leq 10$ and $\delta\leq\frac{1}{n}$ \citep{ponomareva23dpfy}, where $n$ is the cardinality of the dataset.

One can instantiate Definition~\ref{def:eps} with different of neighbourhood relations. Definition \ref{def:neighbour} describes our setting of \emph{expert-level privacy} and the more commonly studied \emph{trajectory-level privacy}.

\begin{definition}[Expert- and Trajectory-level privacy]\label{def:neighbour}
Let $\mathcal{D}$ be the domain of all valid trajectories possible in the underlying MDP. Let $D^{\Pi}\in \mathcal{D}^*$ be a data set of trajectories generated by a set of experts $\Pi$. $D^{\Pi}, D^{\Pi'}\in\mathcal{D}^*$ are ~\emph{expert-level neighbours} if the set of experts $|\Pi\Delta \Pi'|=1$, where $\Delta$ is the set difference. Alternatively, if $D,D'\in\mathcal{D}^*$ are sets of trajectories and $|D\Delta D'|=1$, then $D$ and $D'$ are~\emph{trajectory-level neighbours}.
\end{definition}

We note that the key difference between expert and trajectory level privacy notions, as defined above is that removing an expert $\pi$ can remove up to $|D^\pi|$ trajectories from the overall dataset, where $D^\pi$ is the set of trajectories contributed by $\pi$. This is a much larger change than allowed in trajectory-level privacy, and a na\"ive treatment would result in paying an additional cost due to group privacy~\citep{dwork2014dp}, scaling linearly in the number of trajectories each expert contributes. At the same time, as we argue below, via a set of settings, expert-level privacy is often a critically required notion of privacy protection. 

\subsection{Problem Setup}
\label{sec:setup}

\paragraph{Dataset.} Let $\Pi$ denote a set of \(m\) privacy-sensitive policies, corresponding to $m$ experts, $\Pi = \{\pi_1, \dots, \pi_m\}$. $D^\Pi = D^{\pi_1} \cup \dots \cup D^{\pi_m}$ is the aggregated dataset generated by the interaction of the $m$ experts with the environment.
Each $D^{\pi_i}$ is a set of $N$ trajectories  $\tau = (s_1, a_2, s_2, a_2, \dots, s_L, a_L, s_{L+1})$ of length $L$ obtained by logging the interactions of policy $\pi_i$ with the MDP $\mathcal{M}=\langle S, A, r, P, \rho_0, \gamma \rangle$, where $s_1 \sim \rho_0$, $a_h \sim \pi_i( \cdot | s_h)$ and $s_{h+1} \sim P( \cdot | s_h, a_h)$. We denote by $|\tau|$ the length of the trajectory as the number of full state-action pairs it consists of. Further, we use $\tau_{h:h'} = (s_h, a_h, \dots s_{h'}, a_{h'}, s_{h' + 1})$ , where $h\leq h'$, to denote the sub-trajectory of $\tau$ starting from the $h$-th timestep until the $h'$-th, including the trailing next state $s_{h' + 1}$.

\mypar{Note on Trajectory Notation}
As in Definition~\ref{def:offlinerl}, trajectories are classically defined as a series of $(s, a, r, s')$ tuples, where the next state $s'$ of one transition is the state $s$ of the next transition. Since the rewards associated with transitions are not used in the privacy analysis or in the filtering stage of our algorithm, we simplify the classical notation for our analysis and proofs as described above. 

\paragraph{Goal.} 
The goal is to learn a policy using an offline RL algorithm with expert-level DP guarantees on $D^\Pi$, thus ensuring that the policy learnt in this manner is not much different from a policy learnt using $D^{\Pi'}$. 

We make the following mild assumptions in the design of our solution.

\begin{assume}
The action space $A$ is discrete.
\label{ass:discrete}
\end{assume}
This is a fairly mild assumption satisfied in many offline RL settings. This arises while identifying stable trajectory prefixes for our algorithm, and we leave an extension to continuous action spaces to future work.

\begin{assume}
We can query the sensitive expert policies, i.e., evaluate $\pi_i(a | s)$ for a given $i, s$ and $a$.
\label{ass:query}
\end{assume}
Similar assumptions are reasonable to make in setups involving learning a student model from an ensemble of privacy-sensitive teacher models, where access to the teacher models are assumed (e.g. \cite{papernot2022semi}).  The privacy-sensitive experts in our setting are analogous to these privacy-sensitive teacher models. Again, query access only comes up in computing the stability of the trajectory prefix, and can likely be relaxed to using behavior-cloned versions of the experts with a slightly worse utility bound.

\begin{assume}\label{ass:pmin}
Let $\Pi_s$ be the class of expert policies considered by our algorithm. Hence, $\Pi \subseteq \Pi_s$. We assume a \textit{minimum action probability} ($p_{min}$), defined as,
$p_{\min} \leq \min_{\pi \in \Pi_s} \inf_{s \in S} \min_{a \in A} \pi(a | s)$
\end{assume}
This value is used in the data filtering stage (first stage) of our algorithm. Note that we can always take $p_{min} = 0$. However, our algorithm is more meaningful when $p_{min} > 0$. Examples of common policy classes with $p_{min} > 0$ include $\varepsilon$-greedy policies and softmax policies with bounded parameters.

Existing work in DP-RL relies on strict assumptions like tabular settings or linear function approximation, limiting real-world applicability. Our approach removes these constraints and extends to any state-of-the-art gradient-based offline RL algorithm with general function approximators and continuous state spaces.

\section{ALGORITHM}
\label{sec:algo}
For a given trajectory prefix $\tau$
, we define the total probability of $\tau$ being generated by all the experts as: $count_{\tau}(\Pi) = \sum_{i=1}^m \left[ \prod_{j=1}^{L_\tau} \pi_i(a_j | s_j) \right]$
\footnote{The full probability of being generated by the given set of experts, is actually proportional to the count function multiplied by the state-transition probabilities (which do not change for neighbouring experts).}, where $L_\tau=|\tau|$, is the length of trajectory $\tau$.
Algorithm~\ref{alg:prefix_query} deems $\tau$ stable if $count_{\tau}(\pi)$ exceeds a carefully chosen threshold, implying that enough experts are likely to generate $\tau$. These stable prefixes may be used in training an offline RL algorithm without any noise addition. We need Assumption~\ref{ass:discrete} for this step. Given a trajectory, the count function gives the sum of probabilities of each expert traversing it. We use $count_{\tau}(\Pi)$ to check if enough experts can generate this trajectory.

\begin{algorithm}[htbp]
\caption{PrefixQuery ($\mathcal{A}_{PQ}$): Tests if count of experts expected to execute a trajectory is large enough}
\label{alg:prefix_query}
\begin{algorithmic}[1]
    \Require Expert policies $\Pi$, trajectory $\tau = (s_1, a_1, s_2, a_2, \dots)$, stability threshold $\hat{\theta}$, $\varepsilon$
    \State Compute expected experts: $count_{\tau}(\Pi) \gets \sum_{\pi \in \Pi} \prod_{j=1}^{|\tau|} \pi(a_j | s_j)$ %\COMMENT{Number of experts that generate $\tau$}
    \If{$count_{\tau}(\Pi) + Lap(\frac{4}{\varepsilon}) > \hat{\theta}$}{ \textbf{return} $\top$}
    % \State \textbf{return} $\top$
    \Else{ \textbf{return} $\perp$}
    % \State \textbf{return} $\perp$
    \EndIf
\end{algorithmic}
\end{algorithm}

\begin{algorithm}[thbp]
\caption{DataRelease ($\mathcal{A}_{DR})$: Releasing public dataset after privatisation}
\label{alg:data_release}
\begin{algorithmic}[1]
    \Require Expert dataset $D^\Pi$, expert policies $\Pi$, $\varepsilon_1$, $\delta_1$, min action-selection probability $p_{min}$, unstable query cutoff $T \leq N \cdot m$
    \State $L \gets \underset{\tau\in D^\Pi}{\text{max}}\;|\tau|$
    \State $\varepsilon' \gets \frac{\varepsilon_1}{ \sqrt{32T\log(2/\delta_1)}}, \quad \delta' \gets \frac{\delta_1}{(2TL)}$
    \State $c_{min} \gets \frac{e^{\varepsilon'}}{(e^{\varepsilon'} - 1)}, \quad \theta \gets \frac{c_{min}}{p_{min}}, \quad D^\Pi_{stable} \gets \{\}$\
    \State Reshuffle $D^\Pi$ %\borja{Still not convinced this is the right thing to do from a theory perspective.}
    \For{$c = 1, \dots, T$}
        \State $\tau \gets$ next trajectory from $D^\Pi$
        \State $\hat{\theta} \gets \theta + \frac{4}{\varepsilon'} \log(1/\delta')+ Lap(\frac{2}{\varepsilon'})$
        \For{$i = 1$ to $L$}
            \State $r \gets \mathcal{A}_{PQ}(\Pi, \tau_{1:i}, \hat{\theta}, \varepsilon')$ %\COMMENT{Algorithm \ref{alg:prefix_query}}
            \If{$r = \top$ and $i=|\tau|$}{ $D^\Pi_{stable} \gets D^\Pi_{stable} \cup \{\tau_{1:i}\}$}
            \ElsIf{$r=\perp$}
                \If{$i > 1$}{ $D^\Pi_{stable} \gets D^\Pi_{stable} \cup \{\tau_{1:i-1}\}$}
                \EndIf
                \State \textbf{break}
            \EndIf
        \EndFor
    \EndFor
    \State \textbf{return} $D^\Pi_{stable}$
\end{algorithmic}
\end{algorithm}

Algorithm~\ref{alg:data_release} uses Algorithm~\ref{alg:prefix_query} as a subroutine, traversing up to $T$ trajectories to find stable prefixes. The dataset $D^\Pi_{stable}$ returned by Algorithm \ref{alg:data_release} is composed of these stable trajectory prefixes. 
The discarded tails of each trajectory (including several full trajectories that were discarded) are collected in another dataset: $$D^\Pi_{unst} = \{ \tau_{k+1:|\tau|} \mbox{ if } \tau_{1:k} \in D^\Pi_{stable} \mbox{ for $k \geq 1$ or } \tau,~~\tau:D^\Pi\}$$ 

Consider any parameterized gradient-based offline RL algorithm $\mathcal{H}_\phi$, where $\phi$ denotes the parameters. Training this algorithm consists of using $(s, a, r, s')$ tuples collected in the offline dataset to update the parameters (by splitting the trajectories into into individual transitions). Having split the dataset into stable and unstable trajectories, we can now leverage the stability of $D^\Pi_{stable}$ to accumulate the gradient updates of $\mathcal{H}_\phi$ on transitions drawn from this dataset without any noise addition. The full algorithm, detailing the training using a mix of noisy gradient updates for transitions from $D^\Pi_{unst}$ and noiseless gradient updates for transitions from $D^\Pi_{stable}$ is given in Algorithm \ref{alg:selective_dpsgd}.

The sampling probability $p$ controls the expected fraction of batches sampled from the unstable dataset $D^\Pi_{unst}$. The DP-SGD updates of the form DP-SGD($\phi_n, \mathcal{B}, \varepsilon_2, \delta_2, N$) in Algorithm \ref{alg:selective_dpsgd} refer to noisy gradient updates with additive noise scaled as to ensure $(\varepsilon_2, \delta_2)$ expert-level DP. To do so we add the modifications to the standard DP-SGD algorithm~\citep{abadi2016dp}:
\begin{itemize}
    \item During gradient updates, we ensure that an expert contributes \textit{at most one transition} to a batch. 
    \item To account for privacy amplification while sampling batches from the dataset, we use the use the fraction of experts from which we sample. Hence, the subsampling coefficient $q$ for DP-SGD analysis is $q=p\cdot\frac{|B|}{m}$, where $m$ is the number of experts.
\end{itemize}
We detail the modified expert-level DP-SGD, used as a baseline in Section \ref{sec:expt}, in Appendix~\ref{app:expdetails}. See Section~\ref{sec:analysis} for the privacy analysis for the overall pipeline combining Algorithms \ref{alg:data_release} and \ref{alg:selective_dpsgd}.

\begin{algorithm}[bhtp]
\caption{Selective DP-SGD for Offline RL}
\label{alg:selective_dpsgd}
\begin{algorithmic}[1]
    \Require Stable dataset $D^\Pi_{stable}$, Unstable dataset $D^\Pi_{unst}$, Sampling probability $p$, Gradient-based offline RL algorithm $\mathcal{H}_\phi$, $\varepsilon_2, \delta_2$, $N$ training steps, $B$ batch size
    \State Initialize $\phi_0$
    \For{$n=1$ to $N$}
        \State $b \sim \text{Bernoulli}(p)$
        \If{$b = 1$}{ $\phi_{n+1} \gets \text{DP-SGD}(\phi_n, \mathcal{B}, \varepsilon_2, \delta_2, N)$, where we sample batch $\mathcal{B} \sim D^\Pi_{unst}$}
        \ElsIf{$b=0$}{ $\phi_{n+1} \gets \text{SGD}(\phi_n, \mathcal{B})$, where we sample batch $\mathcal{B} \sim D^\Pi_{stable}$}
        \EndIf
    \EndFor
    \State \textbf{return} $\mathcal{H}_{\phi_N}$
\end{algorithmic}
\end{algorithm}

\mypar{Relationship with User-level DP} The relationship between trajectory-level and expert-level DP is analogous to the notions of example-level and user-level privacy in existing DP literature~\citep{levy2021user, zhou2022locallydp, zhao2024huber}. However, we retain naming conventions from existing RL literature.

We note that the expert-level DP-SGD (discussed above and in Appendix \ref{app:expdetails}) update step in Line 6 of Algorithm \ref{alg:selective_dpsgd} is compatible with other user-level DP algorithms that can be used for training on the unstable data instead. Most practical approaches for achieving user-level DP are based on clipping user-level gradients to bound user contributions~\citep{levy2021user, bassily2023userlevel, liu2024userlevel, de2022dpimage}. These approaches have been empirically tested to be successful on large scale datasets~\citep{charles2024finetuning}. Other approaches for achieving user-level DP are mostly theoretical in nature and come with restrictions (such as on the function class) that limit their practical usage. The baseline used for comparison in our paper (described in Appendix \ref{app:expdetails}) is a representative of this class of approaches, and a natural way to tackle our problem statement in such a way that we generalize to any function approximators and continuous, high-dimensional state spaces. We demonstrate that our method outperforms this baseline by identifying a subset of the data where noisy training is not required, before running the same baseline on the rest of the data.

\mypar{Sub-Optimality Guarantees for the Learned Policy} While we show in the following section that the Algorithm~\ref{alg:selective_dpsgd} provides expert-level privacy, we do not provide explicit bounds on the suboptimality of the returned policy. Typical offline RL literature shows that using pessimistic approaches, the learned policy is competitive with any other policy whose state-action distribution is well-covered by the offline data distribution~\citep{xie2021bellman, cheng2022adversarially}. However, our training distribution undergoes two changes from the \textit{a priori} offline data distribution. First is that the stable prefixes might be distributionally rather different from the entire state-action distribution, and potentially emphasize the states and actions near the start of a trajectory, as shorter prefixes are more likely to be stable. Second, the gradient clipping in DP-SGD adds an additional bias. Understanding the effect of latter with arbitrary function approximation is beyond the scope of this paper. We do, however, study the effects of distributional biases from adding varying amounts of the stable prefixes in our experiments, and find that they are typically very beneficial, particularly in tasks where the optimal behaviour does not change significantly over time.

\section{PRIVACY ANALYSIS}
\label{sec:analysis}
We note that all the trajectories may be of various lengths. However, as we shall see in the analysis, the privacy cost in the $\delta$ parameter scales linearly with the length of the trajectory, i.e., $\delta_1 \propto L \delta'$ (Line 2 in Algorithm \ref{alg:data_release}), where $\delta'$ is the per-timestep cost and $\delta_1$ is the full cost of the trajectory. By setting $L$ to be the maximum length across all trajectories, we carry out a worst-case analysis of the privacy leakage. For a fixed per-timestep privacy budget, we overestimate the total privacy cost by safely considering the worst-case (maximum) trajectory lengths. In other words, for the benefit of the analysis, we assume that all trajectories have the same (maximal) length $L = \underset{\tau\in D^\Pi}{\text{max}} \;|\tau|$.

Algorithm \ref{alg:data_release} can be thought of as $T$ iterations of a combination of two algorithms $\mathcal{A}_1$ and $\mathcal{A}_2$:
\begin{itemize}
    \item $\mathcal{A}_1$ samples a random expert $\pi \sim \mathrm{Unif}(\Pi)$ and a trajectory $\tau \sim D^\pi$ and converts it into $|\tau|$ queries, $Q^\tau = \{count_{\tau_1}(.), \dots, count_{\tau_{|\tau|}}(.)\}$, where $\tau_i=\tau_{1:i}$ is an $i$-length prefix of the trajectory $\tau$. 
    \item $\mathcal{A}_2$ is invoked on $\Pi$ and the query set $Q^\tau$, running an iteration of the Sparse-Vector Technique \citep{dwork2014dp} (SVT). $\mathcal{A}_2$ treats $\Pi$ as the privacy-sensitive database, $Q^\tau$ as the query sequence, and $\hat{\theta}$ as the threshold for SVT.
    $\mathcal{A}_2(\Pi, Q^\tau)$ outputs $r\in \{\top, \bot\}^{k+1}$ such that $r[i]=\top \;\forall\; i\leq k$ and $r[k+1]=\bot$, where $k \leq L$ is the length of the stable prefix. Note that $r[i]=\mathcal{A}_{PQ}(\Pi, \tau_{1:i}, \hat{\theta}, \varepsilon')$.
    \item $\mathcal{A}_1$, upon receiving $r$, releases the prefix $\tau_{1:k} = (s_1, a_1, \dots, s_k, a_k, s_{k+1})$. 
\end{itemize}

\textit{For a fixed stream of queries $Q^\tau$}, Algorithm $\mathcal{A}_2$ is $(\varepsilon',0)$-DP since it just invokes one round of the Sparse Vector Algorithm (\cite{dwork2014dp}). Note that the trajectories are just used for creating the queries for SVT, and the set of expert policies $\Pi$ is the privacy database.
For a given trajectory $\tau$, $\mathcal{A}_2$ identifies the longest prefix which is stable, or in other words, has enough experts which are likely to sample this prefix. Note that, the privacy cost $\varepsilon'$ for identifying this prefix is independent of the length of the trajectory.

However, $Q^\tau$ itself is a function of the set of experts $\Pi$. Despite this, in the following analysis, we show that $\mathcal{A} = \mathcal{A}_1\circ\mathcal{A}_2$ is also differentially private (due to the carefully chosen threshold in Algorithm \ref{alg:data_release} Line 6). We then compose the privacy guarantees over $T$ trajectories, since $\mathcal{A}_{DR}$ is simply $\mathcal{A}$ applied $T$ times.

We now state two key lemmas for proving that $\mathcal{A}$ is differentially private.
For a given trajectory prefix $\tau_k$ of length $k$, let $E_{\tau}$ denote the event that $\mathcal{A}_1$ samples any trajectory $\tau'$ containing $\tau_k$ as a prefix, i.e., $\tau'_{1:k} = \tau_k$.

\begin{restatable}{lemma}{neighborcnt}
% \begin{lemma}
\label{count_lemma}
For a trajectory prefix $\tau_k$, and neighbouring expert sets $\Pi, \Pi'$, if $count_{\tau_k}(\Pi) \geq c_{min}$ then,
\begin{equation}
    \Pr(E_{\tau_k}|\Pi) \leq e^{\varepsilon'}\Pr(E_{\tau_k} | \Pi')
\end{equation}
where $c_{min}$ and $\varepsilon'$ are as defined in Algorithm \ref{alg:data_release}.
% \end{lemma}
\end{restatable}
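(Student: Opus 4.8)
The plan is to first obtain an exact closed form for $\Pr(E_{\tau_k}\mid\Pi)$ in terms of $count_{\tau_k}(\Pi)$, and then bound the ratio of this quantity between the two neighbouring expert sets, using the precise value of $c_{min}$ to close the estimate. First I would unfold $\mathcal{A}_1$: it samples $\pi\sim\mathrm{Unif}(\Pi)$ and then $\tau\sim\pi$, and $E_{\tau_k}$ is the event that the realized trajectory matches the fixed prefix $\tau_k=(s_1,a_1,\dots,s_k,a_k,s_{k+1})$ on its first $k$ steps. Writing out the trajectory law, the probability of generating this prefix under a single expert $\pi_i$ factors into an expert-independent dynamics term $D(\tau_k)=\rho_0(s_1)\prod_{j=1}^k P(s_{j+1}\mid s_j,a_j)$ and the policy term $\prod_{j=1}^k \pi_i(a_j\mid s_j)$. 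Averaging over the uniform choice of expert yields the clean identity
\[
\Pr(E_{\tau_k}\mid\Pi)=\frac{D(\tau_k)}{|\Pi|}\sum_{\pi_i\in\Pi}\prod_{j=1}^k\pi_i(a_j\mid s_j)=\frac{D(\tau_k)}{|\Pi|}\,count_{\tau_k}(\Pi).
\]

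Since $\tau_k$ and the MDP are fixed, $D(\tau_k)$ is identical for $\Pi$ and $\Pi'$ and cancels in the ratio (if $D(\tau_k)=0$ both sides vanish and the claim is trivial), leaving
\[
\frac{\Pr(E_{\tau_k}\mid\Pi)}{\Pr(E_{\tau_k}\mid\Pi')}=\frac{|\Pi'|}{|\Pi|}\cdot\frac{count_{\tau_k}(\Pi)}{count_{\tau_k}(\Pi')}.
\]
Because $|\Pi\Delta\Pi'|=1$, I would split into the two neighbour directions. Writing $C=count_{\tau_k}(\Pi)$ and $m=|\Pi|$: if $\Pi'=\Pi\setminus\{\pi\}$ (so $|\Pi'|=m-1$) then $count_{\tau_k}(\Pi')=C-\prod_j\pi(a_j\mid s_j)$ with the removed contribution lying in $[0,1]$, so the ratio is at most $\tfrac{(m-1)C}{m(C-1)}\le \tfrac{C}{C-1}$; if instead $\Pi'=\Pi\cup\{\pi'\}$ (so $|\Pi'|=m+1$) the added contribution is non-negative, so the ratio is at most $\tfrac{m+1}{m}$.

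Finally I would close both cases with $c_{min}=e^{\epsilon'}/(e^{\epsilon'}-1)$. In the removal case, the hypothesis $C\ge c_{min}$ is exactly the condition making $\tfrac{C}{C-1}\le e^{\epsilon'}$, since $C\mapsto C/(C-1)$ is decreasing and equals $e^{\epsilon'}$ precisely at $C=c_{min}$; note $c_{min}>1$ also keeps the denominators strictly positive so no degenerate division occurs. The main obstacle is the addition case, where the normalization factor $\tfrac{m+1}{m}>1$ must be controlled on its own and is not directly governed by the threshold. Here I would invoke the elementary bound $count_{\tau_k}(\Pi)\le m$ (each per-expert product of probabilities is at most $1$), so that $m\ge C\ge c_{min}\ge \tfrac{1}{e^{\epsilon'}-1}$, which gives $\tfrac{m+1}{m}=1+\tfrac1m\le e^{\epsilon'}$. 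Combining the two cases yields $\Pr(E_{\tau_k}\mid\Pi)\le e^{\epsilon'}\Pr(E_{\tau_k}\mid\Pi')$ under the stated hypothesis, completing the argument.
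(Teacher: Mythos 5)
Your proposal is correct and follows essentially the same route as the paper's proof: the same factorization of $\Pr(E_{\tau_k}\mid\Pi)$ into an expert-independent dynamics term times $count_{\tau_k}(\Pi)/|\Pi|$, the same ratio analysis over the two neighbour directions, and the same use of $c_{min}=e^{\epsilon'}/(e^{\epsilon'}-1)$ to close the bound. The only (minor, equally valid) difference is in the addition case: the paper funnels both cases through the single bound $\tfrac{C}{C-1}$ via the observation $m+1>C$, whereas you bound $\tfrac{m+1}{m}$ directly from $m\geq C\geq c_{min}\geq \tfrac{1}{e^{\epsilon'}-1}$; your explicit treatment of the degenerate case $D(\tau_k)=0$ is a nice touch the paper omits.
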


If the probability of seeing a trajectory prefix $\tau_k$ across all experts is large enough, then the probability that $\mathcal{A}_1$ samples any trajectory with prefix $\tau_k$ is stable when one expert changes. See Appendix \ref{app:proofs} for proof.

\begin{restatable}{lemma}{lowcount}
% \begin{lemma}
\label{theta_lemma}
For any trajectory prefix $\tau_k$, and expert set $\Pi$ such that $count_{\tau_k}(\Pi) < \theta$ the following holds:
\begin{equation}
    \Pr(\mathcal{A}_{PQ}(\Pi, \tau_k, \hat{\theta}, \varepsilon') = \top) < \delta'
\end{equation}
where $\varepsilon', \delta', \theta, \hat{\theta}$ are as defined in Algorithm \ref{alg:data_release}.
% \end{lemma}
\end{restatable}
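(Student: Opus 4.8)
The plan is to unfold the definition of the accept event and reduce the claim to a one-sided tail bound for the difference of two independent Laplace variables. Write $c := count_{\tau_k}(\Pi)$, which by hypothesis satisfies $c < \theta$. Reading off the noise scales from the algorithms, the draw inside $\mathcal{A}_{PQ}$ (line 3 of Algorithm~\ref{alg:prefix_query}, with $\epsilon = \epsilon'$) is $Z_1 \sim \mathrm{Lap}(4/\epsilon')$, and the draw defining $\hat\theta$ (line 6 of Algorithm~\ref{alg:data_release}) is $Z_2 \sim \mathrm{Lap}(2/\epsilon')$, so that $\hat\theta = \theta + m + Z_2$ with $m := \tfrac{4}{\epsilon'}\log(1/\delta')$. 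The event $\mathcal{A}_{PQ} = \top$ is exactly $c + Z_1 > \hat\theta$, which rearranges to $Z_1 - Z_2 > (\theta - c) + m$. Since $\theta - c > 0$, it suffices to establish the cleaner inequality $\Pr(Z_1 - Z_2 > m) < \delta'$, as dropping the nonnegative slack $\theta - c$ can only increase the probability.

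The quantity $m$ is calibrated so that the slower-decaying noise satisfies $\Pr(Z_1 > m) = \tfrac12 e^{-m\epsilon'/4} = \tfrac12\delta'$, whereas $Z_2$ decays twice as fast with $\Pr(Z_2 > m) = \tfrac12(\delta')^2$. A naive union bound that partitions the margin as $m = m_1 + m_2$ across the two variables does not close here, because $Z_1$ alone already needs essentially all of $m$ to reach tail $\delta'$, leaving no room to control $Z_2$. I would instead bound $\Pr(Z_1 - Z_2 > m)$ by conditioning on the sign of $Z_2$. On $\{Z_2 > 0\}$ the inequality forces $Z_1 > m$, so this part contributes at most $\Pr(Z_1 > m)\,\Pr(Z_2 > 0) = \tfrac12\delta' \cdot \tfrac12 = \tfrac14\delta'$ by independence. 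On $\{Z_2 \le 0\}$, substituting $Z_2 = -u$ with $u \ge 0$ and integrating $\Pr(Z_1 > m - u)$ against the density $\tfrac{\epsilon'}{4}e^{-u\epsilon'/2}$ of $u$, the range $u \in [0, m]$ yields a single exponential that evaluates in closed form to $\tfrac12(1-\delta')\delta'$, while the tail $u > m$ contributes at most $\tfrac12(\delta')^2$. Summing the pieces gives
\[
\Pr(Z_1 - Z_2 > m) \le \tfrac14\delta' + \tfrac12(1-\delta')\delta' + \tfrac12(\delta')^2 = \tfrac34\delta' < \delta',
\]
which is the desired bound.

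The only real obstacle is the one just flagged: the two Laplace scales $4/\epsilon'$ and $2/\epsilon'$ are exactly tight enough that a crude triangle-inequality split of the margin fails, and one must genuinely bound the convolution of the two Laplace tails rather than each separately. The conditioning-on-$\mathrm{sign}(Z_2)$ decomposition is the device that keeps the leading constant strictly below one; the positive slack $\theta - c$ afforded by the hypothesis $c < \theta$ gives extra room but is not needed for the bound. The resulting per-prefix guarantee of $\delta'$ is exactly what one later union-bounds over the $L$ prefixes within a trajectory and the $T$ processed trajectories, which is why Algorithm~\ref{alg:data_release} sets $\delta' = \delta_1/(2TL)$.
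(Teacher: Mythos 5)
Your proof is correct and follows essentially the same route as the paper's: both reduce the accept event to the two-Laplace tail $\Pr\left[\mathrm{Lap}(4/\epsilon') > (4/\epsilon')\log(1/\delta') + \mathrm{Lap}(2/\epsilon')\right]$ and bound the resulting convolution, the paper by integrating the conditional tail bound $\tfrac{\delta'}{2}e^{-v\epsilon'/4}$ against the $\mathrm{Lap}(2/\epsilon')$ density (yielding $\tfrac{2}{3}\delta'$), you by conditioning on the sign of the threshold noise (yielding $\tfrac{3}{4}\delta'$). The difference is only in how the integral is organized; both constants are strictly below $\delta'$, so your argument is a valid proof of the lemma.
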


Informally, if the total probability of seeing a trajectory prefix $\tau_k$ across all experts is \textit{not} large enough, then the probability that it is labelled as stable by Algorithm \ref{alg:prefix_query} is less than $\delta'$. See Appendix \ref{app:proofs} for proof.

\begin{restatable}{theorem}{itrthm}
% \begin{theorem}
\label{itr_thm}
Under Assumption~\ref{ass:discrete}, $\mathcal{A}$ is $(2\varepsilon', \frac{\delta_1}{2T})$-DP, where $\varepsilon', \delta_1$ are defined in Algorithm \ref{alg:data_release}.
% \end{theorem}
\end{restatable}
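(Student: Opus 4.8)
The plan is to prove the $(\epsilon,\delta)$-guarantee by a pointwise likelihood-ratio bound on a set of ``good'' outputs together with a union-type bound that absorbs the remaining ``bad'' outputs into $\delta$. Fix two expert-level neighbours $\Pi,\Pi'$ and an output prefix $R=\tau_{1:k}$ with $k<L$. First I would write the probability that $\mathcal{A}$ releases exactly $R$ as a sum over the one-step extension $x$ of $R$ (the $(k{+}1)$-st state-action pair):
\begin{equation}
\Pr(\mathcal{A}(\Pi)=R)=\sum_{x}\Pr(E_{R\cdot x}\mid\Pi)\,S_\Pi(R,x),
\end{equation}
where $S_\Pi(R,x)=\Pr(\mathcal{A}_2 \text{ stops at } k\mid Q^{R\cdot x},\Pi)$. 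The structural fact that licenses this factorisation is that $\mathcal{A}_2$ halts at the first $\bot$, so the event ``stops at $k$'' inspects only the first $k{+}1$ queries and hence depends on the sampled trajectory solely through $\tau_{1:k+1}=R\cdot x$; the first factor collects all trajectories sharing the length-$(k{+}1)$ prefix $R\cdot x$.

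On the good outputs, for which $count_{R}(\Pi)\ge\theta$, I would bound the ratio by $e^{2\epsilon'}$ via two independent factors. For a \emph{fixed} query stream $Q^{R\cdot x}$ the map $\Pi\mapsto S_\Pi(R,x)$ is one run of the Sparse Vector Algorithm and is therefore $(\epsilon',0)$-DP, so $S_\Pi(R,x)\le e^{\epsilon'}S_{\Pi'}(R,x)$ (the event ``stops at $k$'' is a subset of the SVT transcripts, so the guarantee transfers). Then I would invoke Lemma~\ref{count_lemma} at level $k{+}1$ to obtain $\Pr(E_{R\cdot x}\mid\Pi)\le e^{\epsilon'}\Pr(E_{R\cdot x}\mid\Pi')$ for every $x$. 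Multiplying the two bounds termwise and resumming over $x$ gives $\Pr(\mathcal{A}(\Pi)=R)\le e^{2\epsilon'}\Pr(\mathcal{A}(\Pi')=R)$.

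The main obstacle is exactly the second factor: Lemma~\ref{count_lemma} requires $count_{R\cdot x}(\Pi)\ge c_{min}$ for \emph{every} extension $x$ contributing to the sum, whereas releasing $R$ only probes the length-$k$ prefix, and some one-step extensions may have low count. This is precisely where the calibration $\theta=c_{min}/p_{min}$ and Assumption~\ref{ass:pmin} enter. Since every expert assigns probability at least $p_{min}$ to any action, counts contract by at most a factor $p_{min}$ per step, that is $count_{R\cdot x}(\Pi)\ge p_{min}\,count_{R}(\Pi)$; hence $count_{R}(\Pi)\ge\theta$ forces $count_{R\cdot x}(\Pi)\ge p_{min}\theta=c_{min}$ simultaneously for all $x$, so Lemma~\ref{count_lemma} applies uniformly across the sum. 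The boundary case $k=L$ is simpler, since the released object is a full trajectory and Lemma~\ref{count_lemma} applies directly at level $L$ with no continuation to control.

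It remains to bound the bad set $S_{\mathrm{bad}}=\{R:count_{R}(\Pi)<\theta\}$. I would condition on the sampled trajectory $\tau$ and use that $count_{\tau_{1:i}}(\Pi)$ is non-increasing in $i$; letting $k^\dagger$ be the last index with count at least $\theta$, releasing a prefix in $S_{\mathrm{bad}}$ forces query $k^\dagger{+}1$ (whose count is below $\theta$) to return $\top$, which by Lemma~\ref{theta_lemma} has probability at most $\delta'$. Taking the expectation over $\tau$ gives $\Pr(\mathcal{A}(\Pi)\in S_{\mathrm{bad}})\le\delta'\le\delta_1/(2T)$ (a cruder union bound over the at most $L$ below-threshold queries yields the same budget $L\delta'=\delta_1/(2T)$). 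Finally, splitting any measurable event $S$ as $(S\setminus S_{\mathrm{bad}})\cup(S\cap S_{\mathrm{bad}})$, bounding the first part by the pointwise $e^{2\epsilon'}$ estimate and the second by $\Pr(\mathcal{A}(\Pi)\in S_{\mathrm{bad}})$, yields $\Pr(\mathcal{A}(\Pi)\in S)\le e^{2\epsilon'}\Pr(\mathcal{A}(\Pi')\in S)+\delta_1/(2T)$, which is the claim.
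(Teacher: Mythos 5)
Your proof is correct, and it is built from the same ingredients as the paper's --- the factorisation of $\Pr(\mathcal{A}(\Pi)=R)$ over one-step extensions, Lemma~\ref{count_lemma} combined with the $(\epsilon',0)$-DP guarantee of the fixed-query SVT run for the multiplicative part, and Lemma~\ref{theta_lemma} for the failure part --- but it assembles them along a genuinely different decomposition. The paper proves a \emph{per-output hybrid} bound (Lemma~\ref{combined_cmin_lemma}): it splits on whether $count_R(\Pi)$ is above or below $c_{min}$ and, in the high-count case, sub-splits the extensions into $A'=\{a: count_{R\cdot a}(\Pi)\ge c_{min}\}$ and $A''=A\setminus A'$, yielding $\Pr(\mathcal{A}(\Pi)=R)\le e^{2\epsilon'}\Pr(\mathcal{A}(\Pi')=R)+\Pr(E_R\mid\Pi)\,\delta'$ for \emph{every} prefix $R$; the theorem then follows by summing the additive terms over all prefixes and using $\sum_{R}\Pr(E_R\mid\Pi)=L$, which is precisely why $\delta'$ is calibrated to $\delta_1/(2TL)$. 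You instead cut at the larger threshold $\theta=c_{min}/p_{min}$: by the same $p_{min}$-contraction the paper invokes to analyse $A''$, this forces every one-step extension above $c_{min}$, so your good set enjoys a \emph{purely multiplicative} $e^{2\epsilon'}$ bound with no extension sub-split, and you control the bad set $\{R: count_R(\Pi)<\theta\}$ \emph{in aggregate} by $\delta'$, using the monotonicity of $count_{\tau_{1:i}}(\Pi)$ in $i$ (an observation the paper never needs) to reduce the bad event to a single below-threshold query answering $\top$. The trade-off: your $\delta$-accounting is tighter ($\delta'$ rather than $L\delta'$, though both fit the $\delta_1/(2T)$ budget) and avoids the $A'/A''$ bookkeeping, while the paper's version is pointwise more informative in the intermediate regime $c_{min}\le count_R(\Pi)<\theta$, where it still certifies a hybrid bound rather than conceding the entire output to $\delta$. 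Two minor remarks: your extension variable $x$ need only range over actions, since the $(k{+}1)$-st query ignores the trailing state; and, like the paper, your case analysis leaves out the empty output (first query already returning $\bot$), an edge case both arguments implicitly set aside.
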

\emph{Proof Sketch:} The main result used to prove this theorem is the following:
\begin{equation}\label{eqn5}
    \Pr(\mathcal{A}(\Pi) = \tau) \leq e^{2\varepsilon'}\Pr(\mathcal{A}(\Pi') = \tau) + \Pr(E_\tau| \Pi)\cdot\delta'
\end{equation}
which is obtained from Lemma \ref{combined_cmin_lemma} that we give below.

Let $\mathcal{T}$ denote the set of all possible trajectory prefixes using $S, A$. To prove that $\mathcal{A}$ is $(2\varepsilon', \delta_1/(2T))$-differentially private, we need to prove the following for any pair of neighbouring $\Pi, \Pi' \subseteq \Pi_{p_{min}}$:\\
\begin{align*}\label{eqn6}
    \sum_{\tau \in \mathcal{T}} \max\{\Pr(\mathcal{A}(\Pi) = \tau) - e^{2\varepsilon'} \Pr(\mathcal{A}(\Pi') = \tau), 0\} \nonumber \leq \frac{\delta_1}{2T}.
\end{align*}
The above can be proved using \eqref{eqn5}, and we defer the full proof to Appendix \ref{app:proofs}.

The following lemma gives the key result used to prove Theorem \ref{itr_thm}.

\begin{restatable}{lemma}{combinedcmin}
% \begin{lemma}
\label{combined_cmin_lemma}
For a trajectory prefix $\tau_k$ and neighboring expert sets $\Pi, \Pi'$, if $count_{\tau_k}(\Pi) < c_{min}$ then,
\begin{equation}\label{eq:EPidelta}
    \Pr(\mathcal{A}(\Pi) = \tau_k) < \Pr(E_{\tau_k}| \Pi)\cdot\delta'. 
\end{equation}
If $count_{\tau_k}(\Pi) \geq c_{min}$ then, % and $k < L$
\begin{equation}\label{eq:EPiepsilon}
    \Pr(\mathcal{A}(\Pi) = {\tau_k}) \leq e^{2\varepsilon'} \Pr(\mathcal{A}(\Pi') = {\tau_k}) + \Pr(E_{\tau_k}| \Pi)\cdot\delta'
\end{equation}
\end{restatable}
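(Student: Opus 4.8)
The plan is to write $\Pr(\mathcal{A}(\Pi)=\tau_k)$ as a product of an $\mathcal{A}_1$-sampling factor and a pure-$\mathcal{A}_2$-noise factor, and then to split into cases according to where $count_{\tau_k}(\Pi)$ sits relative to $\theta$. The noise used by $\mathcal{A}_2$ (the threshold $\hat\theta$ and the per-query Laplace draws) is independent of the trajectory that $\mathcal{A}_1$ samples, and the $(k{+}1)$-st query $count_{\tau_{1:k+1}}$ depends on the sampled continuation only through the next action $a_{k+1}$, since $s_{k+1}$ is already part of $\tau_k$. Hence I would write
\[
\Pr(\mathcal{A}(\Pi)=\tau_k)=\sum_{a\in A}\Pr(E_{\tau_k,a}\mid\Pi)\,\phi_a(\Pi),
\]
where $E_{\tau_k,a}$ is the event that $\mathcal{A}_1$ samples a trajectory with prefix $\tau_k$ followed by action $a$, and $\phi_a(\Pi)$ is the probability, over $\mathcal{A}_2$'s noise, that the queries on $\tau_{1:1},\dots,\tau_{1:k}$ all return $\top$ while the query on the extension $\tau_{1:k+1}^{(a)}$ returns $\bot$. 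Note $\Pr(E_{\tau_k,a}\mid\Pi)\propto count_{\tau_{1:k+1}^{(a)}}(\Pi)$, that $\sum_a\Pr(E_{\tau_k,a}\mid\Pi)=\Pr(E_{\tau_k}\mid\Pi)$, and that $\phi_a(\Pi)\le\Pr(\mathcal{A}_{PQ}(\Pi,\tau_k,\hat\theta,\epsilon')=\top)$ because emitting $\tau_k$ requires the $k$-th query itself to return $\top$.

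For \eqref{eq:EPidelta}, and also for the ``light prefix'' sub-case of \eqref{eq:EPiepsilon}, I would use a single observation: whenever $count_{\tau_k}(\Pi)<\theta$, Lemma \ref{theta_lemma} gives $\Pr(\mathcal{A}_{PQ}(\Pi,\tau_k,\hat\theta,\epsilon')=\top)<\delta'$. This covers the whole regime $count_{\tau_k}(\Pi)<c_{\min}$, because $p_{\min}<1$ forces $c_{\min}<\theta$, as well as the intermediate regime $c_{\min}\le count_{\tau_k}(\Pi)<\theta$. In either case, summing the display over $a$ and using $\phi_a(\Pi)\le\Pr(\mathcal{A}_{PQ}(\Pi,\tau_k,\hat\theta,\epsilon')=\top)<\delta'$ yields $\Pr(\mathcal{A}(\Pi)=\tau_k)<\Pr(E_{\tau_k}\mid\Pi)\,\delta'$, which is exactly \eqref{eq:EPidelta} and already dominates the right-hand side of \eqref{eq:EPiepsilon}.

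The substantive case is \eqref{eq:EPiepsilon} with $count_{\tau_k}(\Pi)\ge\theta$. Here the crucial point is that Assumption \ref{ass:pmin} and the choice $\theta=c_{\min}/p_{\min}$ make \emph{every} one-step extension heavy: for all $a$,
\[
count_{\tau_{1:k+1}^{(a)}}(\Pi)=\sum_{\pi}\Bigl(\prod_{j\le k}\pi(a_j\mid s_j)\Bigr)\pi(a\mid s_{k+1})\ge p_{\min}\,count_{\tau_k}(\Pi)\ge p_{\min}\,\theta=c_{\min}.
\]
With $count_{\tau_{1:k+1}^{(a)}}(\Pi)\ge c_{\min}$ for every $a$, I can apply Lemma \ref{count_lemma} to each length-$(k{+}1)$ extension to bound the sampling factor, $\Pr(E_{\tau_k,a}\mid\Pi)\le e^{\epsilon'}\Pr(E_{\tau_k,a}\mid\Pi')$, and separately invoke that one round of the Sparse Vector Technique run on the \emph{fixed} $(k{+}1)$-query stream $\tau_{1:k+1}^{(a)}$ is $(\epsilon',0)$-DP under the sensitivity-$1$ change $\Pi\to\Pi'$, so $\phi_a(\Pi)\le e^{\epsilon'}\phi_a(\Pi')$. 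Multiplying the two bounds and summing over $a$ gives $\Pr(\mathcal{A}(\Pi)=\tau_k)\le e^{2\epsilon'}\Pr(\mathcal{A}(\Pi')=\tau_k)$, and \eqref{eq:EPiepsilon} follows after adding the nonnegative slack $\Pr(E_{\tau_k}\mid\Pi)\,\delta'$.

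The main obstacle — and the reason the two lemmas must be interlocked rather than applied independently — is that the query stream handed to $\mathcal{A}_2$ is itself generated from the private $\Pi$ by $\mathcal{A}_1$, so the fixed-stream SVT guarantee cannot be used naively; in particular the halting query $count_{\tau_{1:k+1}}$ can change by an unbounded multiplicative factor under an expert swap precisely when its value is small. The threshold bookkeeping above is exactly what neutralizes this: either $count_{\tau_k}(\Pi)<\theta$, in which case the prefix is rejected with probability $1-\delta'$ and lands in the additive term via Lemma \ref{theta_lemma}, or $count_{\tau_k}(\Pi)\ge\theta$, in which case the $p_{\min}$ floor forces every continuation to stay above $c_{\min}$ and keeps Lemma \ref{count_lemma} applicable to all extensions simultaneously. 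I would finally dispatch the boundary case $k=L$ separately: there is no $(k{+}1)$-st query, so only the $\top^{L}$ pattern (via SVT) and the sampling factor on $\tau_L$ (via Lemma \ref{count_lemma}, using $count_{\tau_L}(\Pi)\ge c_{\min}$) need to be controlled, which is strictly easier and requires neither the $\theta$-split nor Assumption \ref{ass:pmin}.
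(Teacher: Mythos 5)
Your proof is correct and follows essentially the same route as the paper's: the same decomposition of $\Pr(\mathcal{A}(\Pi)=\tau_k)$ over the next action, Lemma \ref{theta_lemma} to crush light prefixes, Lemma \ref{count_lemma} together with the fixed-stream $(\epsilon',0)$-DP of one SVT round for heavy extensions, the inequality $count_{\tau_k\cdot a}(\Pi)\ge p_{min}\,count_{\tau_k}(\Pi)$, and a separate treatment of the boundary case $k=L$. The only difference is bookkeeping: the paper (Lemmas \ref{lower_cmin_lemma}--\ref{full_tr_lemma}) partitions the action set into heavy and light extensions $A'$ and $A''$ within a single sum, whereas you split globally on $count_{\tau_k}(\Pi)$ versus $\theta$, disposing of the entire intermediate regime $c_{min}\le count_{\tau_k}(\Pi)<\theta$ with the $\delta'$ bound alone --- an equivalent (and arguably slightly cleaner) organization of the same argument.
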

\emph{Proof Sketch:} The proof of this lemma follows from Lemmas \ref{count_lemma} and \ref{theta_lemma}. Let $r = \top \cdots \top \bot \in \{\top, \perp\}^{k+1}$ and $\gamma^{(L)}=\top^L$. When $k < L$ we have
\begin{align*}
    \Pr(\mathcal{A}(\Pi) = \tau_k) \nonumber = \sum_{a \in A}\left[\Pr(\mathcal{A}_2(\Pi, Q^{\tau_k\cdot a}) =r) \Pr(E_{\tau_k\cdot a} | \Pi)\right]
\end{align*}
where $\tau_k\cdot a$ denotes the trajectory prefix $\tau_k$ followed by action $a$. And when $k = L$,
\begin{equation*}
    \Pr(\mathcal{A}(\Pi) = \tau_k) = \Pr(\mathcal{A}_2(\Pi, Q^\tau_k) = \gamma^{(L)}) \Pr(E_{\tau_k}|\Pi)
\end{equation*}

Consider the case $k = L$. If $count_{\tau_k}(\Pi)$ is small enough, then $\Pr(\mathcal{A}_2(\Pi, Q^\tau_k) = \gamma^{(L)})$ is less than $\delta'$ (Lemma~\ref{theta_lemma}) and we get \eqref{eq:EPidelta}.
On the other hand, if $count_{\tau_k}(\Pi)$ is greater than $c_{min}$, then $\Pr(E_{\tau_k}|\Pi)$ is bounded by Lemma \ref{count_lemma}. Using that $\mathcal{A}_2$ is always $\varepsilon'$-DP when the queries are fixed we get \eqref{eq:EPiepsilon}.
The case of $k < L$ is less straightforward, but can also be analyzed by considering two cases for each $a \in A$, $count_{\tau_k \cdot a}(\Pi) \geq c_{min}$ or $count_{\tau_k \cdot a}(\Pi) < c_{min}$. See Appendix \ref{app:proofs} for the full proof. \qedsymbol

\begin{theorem}
\label{thm:advcomp}
Under Assumption~\ref{ass:discrete}, Algorithm \ref{alg:data_release} is $(\varepsilon_1, \delta_1)$-DP.
\end{theorem}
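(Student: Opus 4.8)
The plan is to exhibit Algorithm~\ref{alg:data_release} as the $T$-fold composition of the single-round mechanism $\mathcal{A} = \mathcal{A}_1 \circ \mathcal{A}_2$ already analyzed in Theorem~\ref{itr_thm}, and then discharge the composition via the advanced composition theorem. The outer loop runs $\mathcal{A}$ exactly $T$ times, each iteration drawing a fresh trajectory (through the reshuffle), a fresh threshold perturbation $Lap(\frac{2}{\epsilon'})$ in Line~6, and fresh Laplace noise inside every call to $\mathcal{A}_{PQ}$. Crucially, all $T$ rounds operate on the \emph{same} private object, namely the expert set $\Pi$; the per-expert trajectories are merely the randomness through which $\Pi$ is accessed. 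Since the noise in distinct rounds is independent, the run is a (possibly adaptive) composition of $T$ mechanisms, each of which is $(2\epsilon', \frac{\delta_1}{2T})$-DP with respect to expert-level neighbours by Theorem~\ref{itr_thm}.

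First I would invoke the advanced composition theorem (see \cite{dwork2014dp}): the $T$-fold adaptive composition of $(\varepsilon_0,\delta_0)$-DP mechanisms is $(\tilde\epsilon,\, T\delta_0 + \delta'')$-DP for every $\delta'' > 0$, with $\tilde\epsilon = \sqrt{2T\log(1/\delta'')}\,\varepsilon_0 + T\varepsilon_0(e^{\varepsilon_0}-1)$. Substituting $\varepsilon_0 = 2\epsilon'$, $\delta_0 = \frac{\delta_1}{2T}$, and the slack $\delta'' = \frac{\delta_1}{2}$, the $\delta$ budget closes immediately:
\[
T\delta_0 + \delta'' = T\cdot\frac{\delta_1}{2T} + \frac{\delta_1}{2} = \delta_1.
\]

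It then remains to verify $\tilde\epsilon \leq \epsilon_1$. Plugging in $\epsilon' = \epsilon_1 / \sqrt{32\,T\log(2/\delta_1)}$, the leading term collapses to exactly
\[
\sqrt{2T\log(2/\delta_1)}\cdot 2\epsilon' = 2\epsilon_1\sqrt{\tfrac{2T\log(2/\delta_1)}{32\,T\log(2/\delta_1)}} = \frac{\epsilon_1}{2},
\]
which is precisely the role of the constant $32$ in the definition of $\epsilon'$. The second-order term satisfies $T\cdot 2\epsilon'(e^{2\epsilon'}-1) \approx 4T(\epsilon')^2 = \epsilon_1^2 / (8\log(2/\delta_1))$, so $\tilde\epsilon \leq \frac{\epsilon_1}{2} + \frac{\epsilon_1^2}{8\log(2/\delta_1)} \leq \epsilon_1$ whenever $\epsilon_1 \lesssim \log(2/\delta_1)$, i.e.\ in the standard regime of small $\epsilon_1$ and $\delta_1$.

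I expect the main obstacle to be not the arithmetic but justifying the composition abstraction rigorously: one must argue that processing $T$ trajectories sampled without replacement from the reshuffled $D^\Pi$ is validly bounded by $T$ applications of $\mathcal{A}$ on the private database $\Pi$ (matching the $\mathcal{A}_1$ uniform-expert sampling model used in Theorem~\ref{itr_thm}), and that the thresholds $\hat\theta$ — which already absorb the $\frac{4}{\epsilon'}\log(1/\delta')$ offset tied to $\delta' = \frac{\delta_1}{2TL}$ — are drawn freshly and independently so the per-round guarantee does not degrade across rounds. A secondary point to make explicit is that the regime condition $\epsilon_1 \lesssim \log(2/\delta_1)$ controlling the second-order term holds for the parameter ranges of interest; alternatively, one could replace advanced composition by a R\'enyi/zCDP accountant to dispatch that term cleanly and relax the condition.
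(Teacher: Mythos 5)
Your proof is correct and follows essentially the same route as the paper: the paper's own proof is a one-liner that views Algorithm~\ref{alg:data_release} as $T$ executions of $\mathcal{A}$, takes the per-round guarantee $\left(2\epsilon', \tfrac{\delta_1}{2T}\right)$ from Theorem~\ref{itr_thm}, and invokes advanced composition (Corollary~3.21 of \cite{dwork2014dp}) with slack $\delta_1/2$, exactly as you do. The only difference is presentational: the paper's cited Corollary~3.21 absorbs the second-order term $T \cdot 2\epsilon'(e^{2\epsilon'}-1)$ internally (under that corollary's own hypothesis that the target $\epsilon_1$ be below $1$), whereas you unroll the explicit Theorem~3.20 form and track that term by hand, which is why you surface the regime condition $\epsilon_1 \lesssim \log(2/\delta_1)$ and correctly identify the constant $32$ in the definition of $\epsilon'$ as being engineered for precisely this cancellation.
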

\begin{proof}
Algorithm \ref{alg:data_release} executes $\mathcal{A}$ with $(\varepsilon_1 / \sqrt{8T\log(2/\delta_1)}, \delta_1 / (2T))$-DP (see Theorem \ref{itr_thm}), $T$ times. Using advanced composition (Corollary 3.21 in \cite{dwork2014dp}) we get that Algorithm~\ref{alg:data_release} is $(\varepsilon_1, \delta_1)$-DP.
\end{proof}

Using all the above, we derive the following expert-level privacy guarantee for the overall training pipeline.
\begin{theorem}
Algorithm $\ref{alg:selective_dpsgd}$ is $(\varepsilon, \delta)$-DP, where:
\begin{equation}
    \varepsilon = \varepsilon_1 + \varepsilon_2 \quad\text{and}\quad  \delta = \delta_1 + \delta_2
\end{equation}
\end{theorem}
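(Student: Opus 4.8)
The plan is to view Algorithm~\ref{alg:selective_dpsgd} as the adaptive composition of two expert-level differentially private mechanisms and then invoke basic composition. Write the full pipeline as $\mathcal{M} = \mathcal{M}_2 \circ \mathcal{M}_1$, where $\mathcal{M}_1 = \mathcal{A}_{DR}$ is Algorithm~\ref{alg:data_release} operating on the raw expert dataset $D^\Pi$ and producing the stable prefix set $D^\Pi_{stable}$, and $\mathcal{M}_2$ is the training loop of Algorithm~\ref{alg:selective_dpsgd} which, given $D^\Pi_{stable}$ together with $D^\Pi$ (through the derived tails $D^\Pi_{unst}$), returns $\mathcal{H}_{\phi_N}$. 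The first mechanism is already shown to be $(\epsilon_1, \delta_1)$-DP in the preceding theorem, so it remains to establish that, conditioned on any fixed realization of $D^\Pi_{stable}$, the mechanism $\mathcal{M}_2$ is $(\epsilon_2, \delta_2)$-DP at the expert level with respect to $D^\Pi$, and then to add up the budgets.

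First I would dispense with the noiseless SGD branch. Whenever the coin $b = 0$, Algorithm~\ref{alg:selective_dpsgd} samples a batch from $D^\Pi_{stable}$ and performs a plain SGD update. Since $D^\Pi_{stable}$ is the output of the already-private $\mathcal{M}_1$, every such update is a function only of $\mathcal{M}_1$'s released output and fresh internal randomness; by the post-processing invariance of differential privacy these steps incur no additional privacy loss in $\epsilon$ or $\delta$. Thus the only raw-data access in $\mathcal{M}_2$ occurs on the $b = 1$ branch, where DP-SGD is run on transitions drawn from $D^\Pi_{unst}$.

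Next I would verify that the DP-SGD branch is $(\epsilon_2, \delta_2)$-DP under the expert-level neighbouring relation of Definition~\ref{def:neighbour}. Here I would rely on the two structural modifications described after Algorithm~\ref{alg:selective_dpsgd}: each expert contributes at most one transition to any batch, so replacing or removing a single expert perturbs the summed, per-example-clipped gradient by at most the clipping norm in a single minibatch contribution, reducing the expert-level sensitivity to the standard per-record sensitivity of DP-SGD; and the effective subsampling rate is $q = p \cdot |B| / m$, accounting both for the Bernoulli choice of the unstable branch and for sampling a batch of experts out of $m$. With the Gaussian noise calibrated to these quantities, the standard subsampled-Gaussian / moments-accountant analysis of \citep{abadi2016dp} over the $N$ steps certifies the $(\epsilon_2, \delta_2)$ guarantee, which is precisely the budget the DP-SGD calls are defined to enforce.

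The final step is composition. Because $D^\Pi_{unst}$ is itself partially determined by the output of $\mathcal{M}_1$, the two stages are genuinely adaptive rather than independent, so I would invoke the adaptive basic composition theorem (\cite{dwork2014dp}): the composition of an $(\epsilon_1, \delta_1)$-DP mechanism with a subsequent $(\epsilon_2, \delta_2)$-DP mechanism is $(\epsilon_1 + \epsilon_2, \delta_1 + \delta_2)$-DP, giving the claimed $\epsilon = \epsilon_1 + \epsilon_2$ and $\delta = \delta_1 + \delta_2$. The main obstacle I anticipate is exactly this adaptivity: one must confirm that the expert-level neighbouring relation is preserved consistently across both stages, since removing one expert simultaneously alters which prefixes land in $D^\Pi_{stable}$ and which tails populate $D^\Pi_{unst}$, so that $\mathcal{M}_2$'s $(\epsilon_2, \delta_2)$ bound is stated against the same notion of neighbours as $\mathcal{M}_1$'s. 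Provided the expert-level sensitivity reduction in the DP-SGD step holds uniformly over all realizations of $D^\Pi_{stable}$, the conditional guarantee is uniform and the adaptive composition applies cleanly.
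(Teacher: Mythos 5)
Your proposal is correct and follows essentially the same route as the paper: the paper's proof is exactly sequential (adaptive basic) composition of the $(\epsilon_1,\delta_1)$-DP release of $D^\Pi_{stable}$ by Algorithm~\ref{alg:data_release} with the $(\epsilon_2,\delta_2)$-DP expert-level DP-SGD training on $D^\Pi_{unst}$, stated in one line. Your elaborations --- treating the noiseless SGD steps on $D^\Pi_{stable}$ as post-processing of an already-private output, and justifying the $(\epsilon_2,\delta_2)$ budget of the DP-SGD branch via the one-transition-per-expert batching and the subsampling rate $q = p\cdot|B|/m$ --- are precisely the details the paper leaves implicit (and defers to its algorithm description and appendix), so there is no gap to report.
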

\begin{proof}
Follows from sequential composition of DP guarantees for the release of $D^\Pi_{stable}$ and DP-SGD training on samples from $D^\Pi_{unst}$.
\end{proof}

% \vishnu{
\mypar{Note on Novelty} We present a novel, useful and non-trivial extension of the Sparse Vector Technique (SVT) to preserve expert-level DP in Offline RL. The stream of queries in vanilla SVT, (Algorithm 1, Section 3.6, \cite{dwork2014dp}), is not dependent on the private database. However, in our setup, the stream of queries is a function of the set of the experts (which is the private database to be protected), and directly applying SVT does not ensure privacy. However, we show that despite the query stream being a function of the private database, SVT can work if the threshold (line 6 in Algorithm 2) is chosen in a certain way. We use the probability distribution of the query stream derived from the experts’ policies to prove this.
% }

\section{EXPERIMENTS}
\label{sec:expt}

Ensuring reasonable privacy requires significant hyperparameter tuning and a large number of data points, often impacting performance adversely\footnote{In supervised learning, for example, the accuracy on ImageNet training can go from $>75\%$ to $\sim47.9\%$ in the private case~\citep{kurakin2022training}, under example-level privacy.}. Intuitively, with the demand for expert-level privacy in our setting, we risk severe degradation of utility since each ``record'' (each expert's data) forms a much larger fraction of the overall dataset. We recover most of this performance loss, significantly outperforming DP-SGD.

\mypar{Dataset Generation} We train $3000$ experts each on the Cartpole, Acrobot, LunarLander and HIV Treatment environments. Cartpole, Acrobot and LunarLander are from the Gymnasium package \citep{farama2023gym}. The HIV Treatment simulator is based on the implementation by \cite{geramifard2015rlpy} of the model described in \cite{ernst2006clinical}. Experts are trained on variations of the default environment, created by modifying the values of key parameters (e.g. gravity in LunarLander environment). The trained experts are then used to generate demonstrations on a default setting of the environment, collectively forming the aggregated dataset for offline RL. This training methodology ensures that the experts show varied (often sub-optimal) behaviours on the default environment. As shown in Figure~\ref{fig:experts}, the experts' return-histograms indicate that the mix of experts we create shows significant diversity in the policies learnt. Further details regarding the environment settings used to train the expert policies are available in Appendix~\ref{app:expdetails}. 
\begin{figure*}[tbhp]
    \centering
     \includegraphics[width=.21\textwidth]{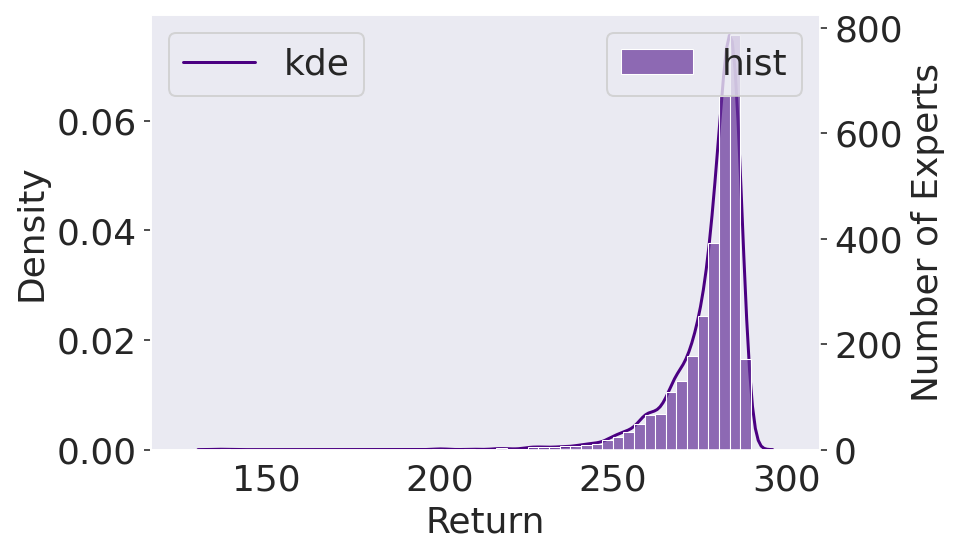}
     \includegraphics[width=0.21\textwidth]{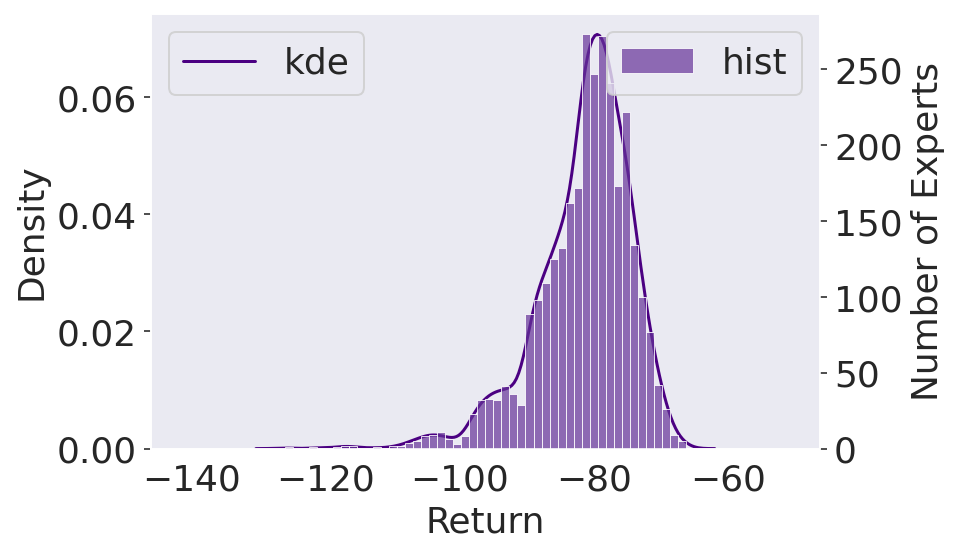}
     \includegraphics[width=0.21\textwidth]{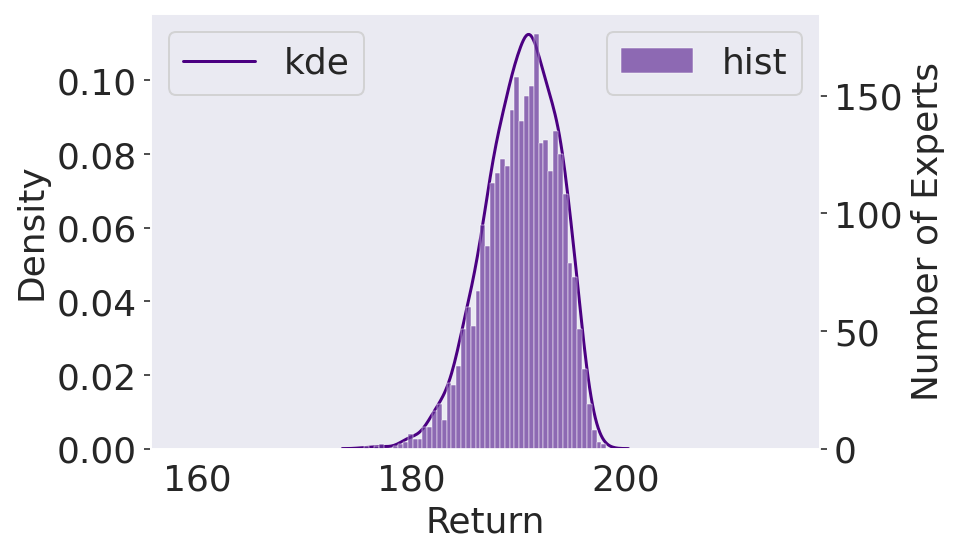}
     \includegraphics[width=0.21\textwidth]{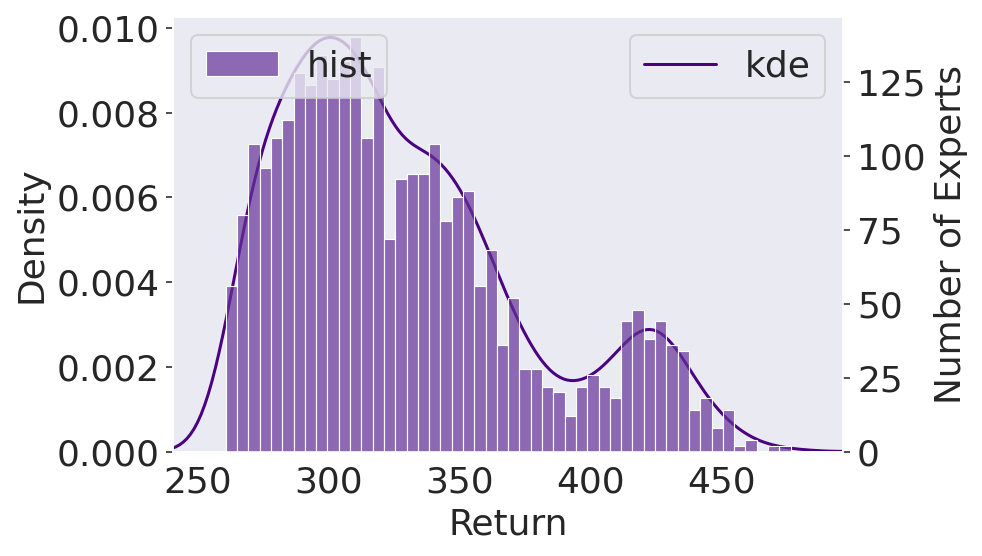}
     \caption{Expert return histograms, with Kernel Density Estimation, for experts trained on heterogeneous (left to right) LunarLander, Acrobot, CartPole and HIV Treatment environments. We use experts with wide ranges of performance on the test environment. Return values for HIV treatment normalised by $10^6$.}
     \label{fig:experts}
\end{figure*}

\mypar{Training Setup}
Our algorithm allows us to use any off-the-shelf gradient-based offline RL algorithm, and we demonstrate this by experimenting with Conservative Q-Learning (CQL) \citep{kumar2020conservative} and Batch Constrained Deep Q-Learning (BCQ) \citep{fujimoto2019off}. More specifically, we use the DQN (Deep Q-Network) version of CQL, and the discrete-action version of BCQ \citep{fujimoto2019benchmarking}, since we operate under the assumption of discrete action spaces. To represent each of the Q-value functions in both the algorithms and the generative model in BCQ, we use a neural network with 2 hidden layers of 256 units each. During training, we perform a grid search to find the optimal set of hyper-parameters (learning rate, batch size, sampling probability $p$, DP-SGD noise) for each environment; see Appendix \ref{app:expdetails}.

The maximum trajectory length is fixed to 200 for Acrobot and CartPole in the collected demonstrations. While evaluating the final learned policy, the maximum episode length is again set as 200 for Acrobot, we increase this to 1000 for Cartpole for increased difficulty in an otherwise ``simple'' environment. We allow all DP-SGD training runs to progress until the privacy budget is used up completely.

% \subsection{Baselines}
\mypar{Baselines}
We compare our method against DP-SGD adapted for expert-level privacy. A brief discussion of the changes we made are presented in Section \ref{sec:algo}. A full discussion, along with analysis, is presented in Appendix \ref{app:expdetails}. We keep the underlying RL algorithm same for fair comparison. In the absence of prior methods for the expert-level privacy setting, we limit our empirical studies to this naive baseline.
We also train a non-private policy over the whole dataset using vanilla CQL/BCQ (ignoring privacy concerns). This serves as an empirical upper-bound on the performance achievable under privacy constraints. In our experiments, we report episodic returns, normalized to lie between 0 (random policy) and 1 (optimal policy) averaged over 10 evaluation runs at the end of training, as a fraction of the return of the non-private baseline.

\subsection{Results}
\label{sec:results}

\begin{figure*}[htbp]
    \centering
    \begin{subfigure}{\linewidth}
    \centering
     \includegraphics[width=0.85\textwidth]{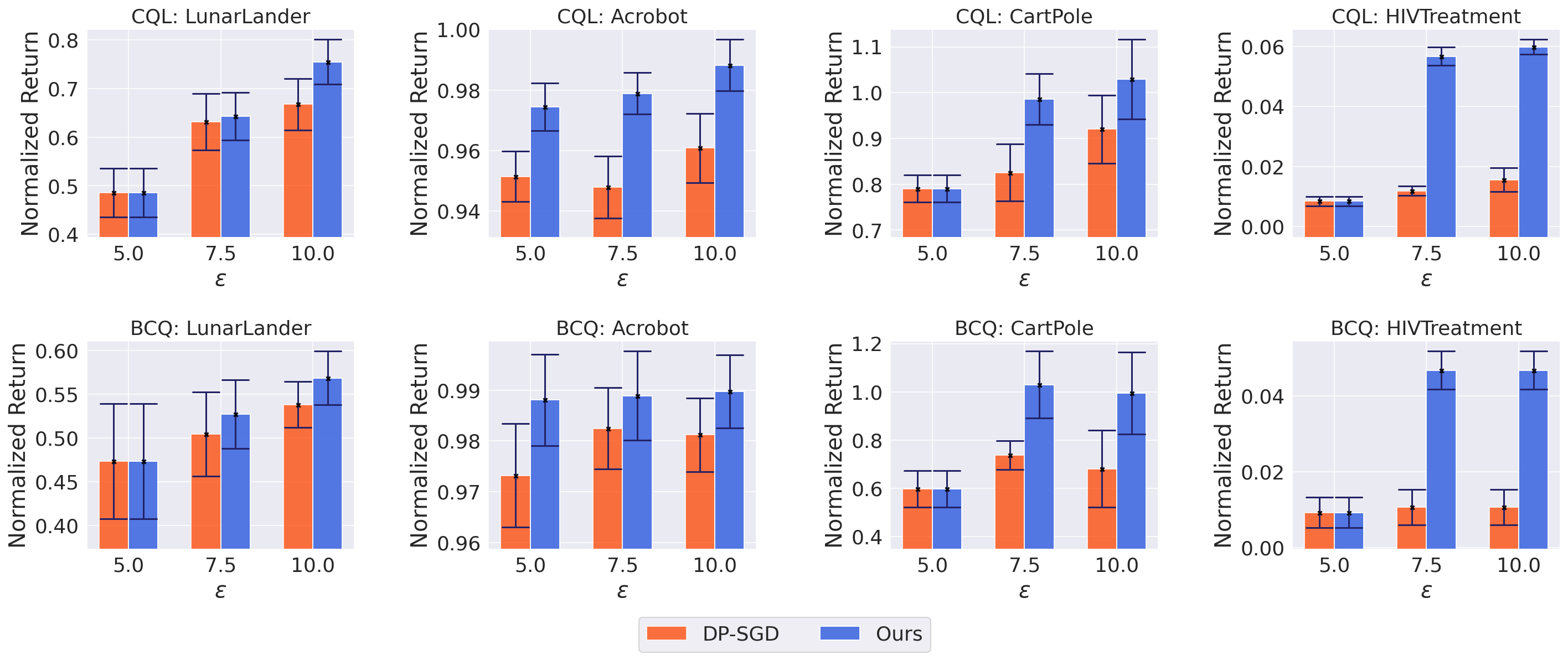}
    \end{subfigure}
     \caption{Performance of our method and DP-SGD for different values of $\varepsilon$ with $\delta = 1/m$, where $m$ is the number of experts. We report the episodic return, normalized between $0$ (random policy) and $1$ (optimal policy) averaged over $10$ evaluation runs (with $95\%$ confidence intervals) at the end of training, as a fraction of the non-private baseline. Our method consistently outperforms DP-SGD, especially in the high $\varepsilon$ regions.}
     \label{fig:app_eps_plots}
\end{figure*}

We observe empirically that our algorithm, which leverages expert consensus to select stable trajectory prefixes prior to offline RL training, shows superior performance to merely using an adapted expert-level DP-SGD algorithm.
Figure \ref{fig:app_eps_plots} compares the performance of our method with DP-SGD across different values of $\varepsilon$. All the results are averaged over 3 runs and $95\%$ confidence intervals are also reported. Our method is able to recover much of the performance loss due to DP-SGD, especially in the high $\varepsilon$-regions.

We experiment with $\varepsilon < 5$ too, but for such low $\varepsilon$, Algorithm~\ref{alg:data_release} yields very little stable data, which does not meaningfully influence training. Thus, the model primarily learns from the unstable data. Thus, setting $\varepsilon_1 = 0$ and $\varepsilon_2 = \varepsilon$ is optimal as we minimize the noise injected by the DP-SGD training. This is in contrast to the observations for higher $\varepsilon$ values where we take $\varepsilon_1 = \frac{3\varepsilon}{4}, \varepsilon_2 = \frac{\varepsilon}{4}$ for LunarLander and CartPole. For Acrobot and HIVTreatment, we set $\varepsilon_1 = \varepsilon, \varepsilon_2 = 0$, that is, we only train on $D_{stable}^{\Pi}$. Similarly, we split $\delta$ as $\delta_1 = \frac{9\delta}{10}, \delta_2 = \frac{\delta}{10}$. These results for low $\varepsilon$ are presented in Table~\ref{tab:ablation}.

We also note the optimal settings of sampling probabilities in Table \ref{tab:sampling_prob}. Interestingly, for Acrobot (with CQL and BCQ) and HIV Treatment (with CQL) environments, just $D_{stable}^\Pi$ suffices to get the optimal performance in high $\varepsilon$-region and we did not need to pay any privacy cost of DP-SGD during model training. We also report the variation of performance with various mixes of the two datasets (by varying $p$) in Table~\ref{tab:ablation}.

\begin{table}[htbp]
\setlength{\tabcolsep}{12pt}
\begin{center}
\adjustbox{max width=0.75\linewidth}{ 
\begin{tabular}{l  c c c  c c c }
\toprule
\multirow{2}{*}{\textbf{Environment}} & \multicolumn{3}{c}{\textbf{CQL}} & \multicolumn{3}{c}{\textbf{BCQ}} \\
\cmidrule(lr){2-4} \cmidrule(lr){5-7}
 & $\varepsilon = 5$ & $\varepsilon = 7.5$ & $\varepsilon = 10$ & $\varepsilon = 5$ & $\varepsilon = 7.5$ & $\varepsilon = 10$\\
\midrule
\textbf{LunarLander}  & 1.0 & 0.8 & 0.8 & 1.0 & 0.5 & 0.5 \\
\textbf{Acrobot}  & 0.0 & 0.0 & 0.0 & 0.0 & 0.0 & 0.0 \\
\textbf{CartPole} & 1.0 & 0.9 & 0.8 & 1.0 & 0.5 & 0.5 \\
\textbf{HIV Treatment} & 1.0 & 0.0 & 0.0 & 1.0 & 0.9 & 0.9\\
\bottomrule
\end{tabular}
}
\end{center}
\caption{Best choices for probability of sampling from $\mathcal{D}^\Pi_{unstable}$ ie. $p$ for Algorithm \ref{alg:selective_dpsgd} on test environments}
\label{tab:sampling_prob}
\end{table}

\begin{table}[htbp]
\setlength{\tabcolsep}{6pt}
\begin{center}

\adjustbox{max width=0.6\linewidth}{    
\begin{tabular}{l  c c c  c c c }
\toprule
\multirow{2}{*}{\textbf{Environment}} & \multicolumn{3}{c}{\textbf{CQL}} & \multicolumn{3}{c}{\textbf{BCQ}} \\
\cmidrule(lr){2-4} \cmidrule(lr){5-7}
 & $\boldsymbol{p=0.0}$ & $\boldsymbol{p=0.8}$ & $\boldsymbol{p=1.0}$ & $\boldsymbol{p=0.0}$ & $\boldsymbol{p=0.5}$ & $\boldsymbol{p=1.0}$\\
\midrule
LunarLander & \text{\color{red} \textsc{inf}} & 0.75 & 0.67 & \text{\color{red} \textsc{inf}} & 0.57 & 0.54 \\
CartPole & 0.67 & 1.03 & 0.92 & 0.31 & 0.99 & 0.67\\
\bottomrule
\end{tabular}
}
\hfill
\adjustbox{max width=0.36\linewidth}{
\begin{tabular}{l c c c}
\toprule
\textbf{Environment} & $\boldsymbol{\varepsilon}$ & $\boldsymbol{\varepsilon_1 = \frac{3}{4} \varepsilon}$ & $\boldsymbol{\varepsilon_1 = 0}$ \\
\midrule
Acrobot (BCQ) & 2.5 & 0.74 & 0.95 \\
Cartpole (BCQ) & 2.5 & 0.28 & 0.58 \\
Cartpole (BCQ) & 5.0 & 0.56 & 0.60 \\
\bottomrule
\end{tabular}
}
\end{center}
\caption{We perform ablation studies with low, intermediate and high sampling probabilities $p$ from the unstable dataset $D^\Pi_{unst}$, at $\varepsilon=10$. {\color{red} \textsc{inf}} (infeasible) denotes that the stable dataset $D^\Pi_{stable}$ was empty, and training was infeasible. We also vary the allocation of $\varepsilon$ in the low-privacy regime ($\varepsilon\leq5$) and report utility.}
\label{tab:ablation}
\end{table}

We also evaluate the effect of varying the number of experts on the performance of our algorithm relative to DP-SGD, in figure \ref{fig:exp_ret_plots}, on the Acrobot environment. As $m$ increases, the likelihood that $count_\tau(\Pi)$ exceeds the noisy threshold $\hat{\theta}$ also increases, and with it the number and length of trajectory prefixes in the stable dataset $D_{stable}^\Pi$. Thus, performance scales with the number of experts. Conversely, for very low values of $m$, the stable dataset is empty and our algorithm defaults towards the baseline.

\begin{wrapfigure}{r}{0.5\textwidth}
    \centering
     \includegraphics[width=0.76\linewidth]{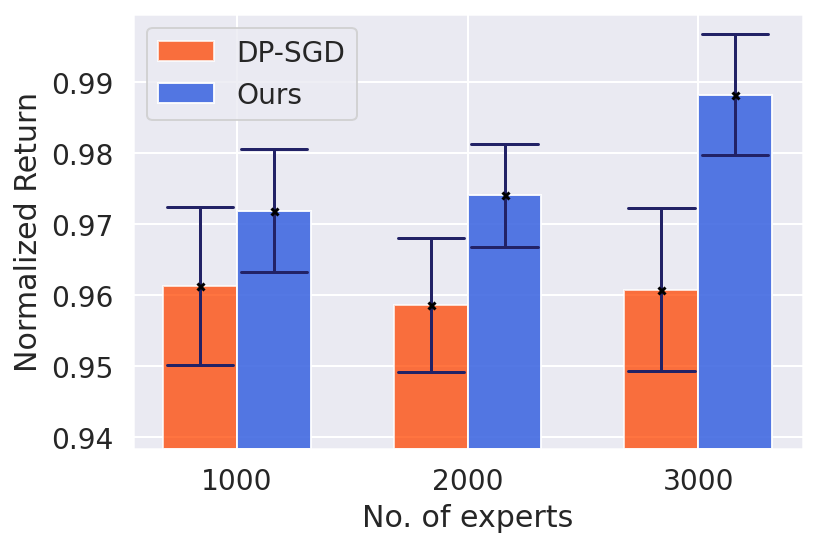}
     \caption{Performance of our method and DP-SGD for $m=1000, 2000$ and $3000$ experts, on the Acrobot environment with $\varepsilon = 10.0$ and CQL as the underlying offline RL algorithm. We observe increasing improvements of our method over DP-SGD as $m$ increases.}
     \label{fig:exp_ret_plots}
\end{wrapfigure}
\mypar{Note on Implementation}
In the implementation of both our method and DP-SGD, we create batches by by shuffling datasets of individual experts and randomly interleaving them such that each batch has at most one data point from each expert. This differs slightly from the poisson sampling based accounting in our privacy analysis. This gap between analysis and implementation in our work corresponds to the gap between privacy amplification by subsampling and shuffling. A promising line of work~\citep{erlingsson2019shuffling, feldman2021hiding} aims to derive comparable theoretical amplification guarantees via shuffling. Empirically, however, the gap in utility is expected to be minimal~\citep{ponomareva23dpfy}.

We use the PLD accountant from the \texttt{dp-accounting} library to account for the privacy loss in DP-SGD training.

\section{CONCLUSION}
\label{sec:discuss}

In this work, we initiate the study of offline RL with expert-level privacy. We describe the novel challenges of the setting, and provide theoretically sound algorithms which are evaluated in proof-of-concept evaluation. In future work, we would like to relax Assumptions~\ref{ass:discrete}-\ref{ass:pmin}, by replacing the queries to the experts with behaviour cloned versions and extending our trajectory prefix selection paradigm to continuous action spaces (for e.g using kernel density estimation). Understanding the utility implications of Algorithm~\ref{alg:selective_dpsgd}, despite the influence of our privatization scheme on the training distribution is another important avenue for future work. Further, privacy auditing in the reinforcement learning setting is another important open research problem orthogonal to our work. Finally, while our empirical evaluation convincingly establishes the promise of this approach, we leave a larger evaluation across diverse and larger-scale RL benchmarks for future studies.

\subsubsection*{Acknowledgements}
The authors thank Krishna Pillutla and Balaraman Ravindran for fruitful discussions and helpful comments. We also thank the TMLR reviewers for their helpful feedback. This research was supported in part by the Post-Baccalaureate Fellowship at the Centre for Responsible AI, IIT Madras. 

{\small
\bibliography{main}
\bibliographystyle{tmlr}
}

\clearpage
\appendix

\section{Theorem Proofs}
\label{app:proofs}

\subsection{Proof of Lemma \ref{count_lemma}}
\neighborcnt*
\begin{proof}
Recall $E_{\tau_k}$ is the event a trajectory with prefix $\tau_k$ is generated. Assuming $\tau_k = s_1, a_1, s_2, \ldots, a_k, s_{k+1}$, the probability that a fixed expert $\pi$ generates such a prefix when interacting with the MDP is $\rho_0(s_1) \left( \Pi_{j=1}^{k} \pi(a_j | s_j) P(s_{j+1} | s_j, a_j) \right)$. Since the expert is chosen uniformly at random, we have:

\begin{align}
    \Pr(E_{\tau_k}|\Pi) 
        &= \frac{1}{m}\sum_{i=1}^m \left[ \rho_0(s_1) \left( \Pi_{j=1}^{k} \pi_i(a_j | s_j) P(s_{j+1} | s_j, a_j) \right) \right] \\
        &= \frac{1}{m} \rho_0(s_1) \left( \Pi_{j=1}^{k} P(s_{j+1} | s_j, a_j) \right) \sum_{i=1}^m \left( \Pi_{j=1}^{k} \pi_i(a_j | s_j) \right) \\
        &= \frac{1}{m} \rho_0(s_1) \left( \Pi_{j=1}^{k} P(s_{j+1} | s_j, a_j) \right) count_{\tau_k}(\Pi) \enspace.
\end{align}
Similarly,
\begin{align}
    \Pr(E_{\tau_k}|\Pi')
        = \frac{1}{m'} \rho_0(s_1) \left( \Pi_{j=1}^{k} P(s_{j+1} | s_j, a_j) \right) count_{\tau_k}(\Pi') \enspace.
\end{align}
Where $m' = |\Pi'|$.
Since $|\Pi\Delta \Pi'|=1$,
\begin{align}
    |count_{\tau_k}(\Pi) - count_{\tau_k}(\Pi')| \leq 1 \quad \text{and} \quad|m - m'| = 1 \enspace.
\end{align}

This gives us the following,
\begin{align}
    \frac{\Pr(E_{\tau_k}|\Pi)}{\Pr(E_{\tau_k}|\Pi')}
        &= \frac{count_{\tau_k}(\Pi) \times m'}{count_{\tau_k}(\Pi') \times m} \\
        &\leq \max\left( \frac{count_{\tau_k}(\Pi) (m + 1)}{(count_{\tau_k}(\Pi) + 1) m}, \frac{count_{\tau_k}(\Pi) (m - 1)}{(count_{\tau_k}(\Pi) - 1) m}, \frac{m+1}{m}, \frac{m-1}{m} \right) \\
        &= \max\left( \frac{count_{\tau_k}(\Pi) (m - 1)}{(count_{\tau_k}(\Pi) - 1) m}, \frac{m+1}{m} \right) \enspace.
\end{align}

Note that, since $m + 1 > count_{\tau_k}(\Pi)$,
\begin{align}
    &\frac{m + 1}{m} < \frac{count_{\tau_k}(\Pi)}{count_{\tau_k}(\Pi) - 1} \\
    &\max\left( \frac{count_{\tau_k}(\Pi) (m - 1)}{(count_{\tau_k}(\Pi) - 1) m}, \frac{m+1}{m} \right) < \frac{count_{\tau_k}(\Pi)}{count_{\tau_k}(\Pi) - 1} \enspace.
\end{align}

This allows us to bound the ratio of probabilities of $E_{\tau_k}$ for $\Pi$ and $\Pi'$ as
\begin{align}
    \frac{\Pr(E_{\tau_k}|\Pi)}{\Pr(E_{\tau_k}|\Pi')}
    < \frac{count_{\tau_k}(\Pi)}{count_{\tau_k}(\Pi) - 1} \leq e^{\varepsilon'} \enspace,
\end{align}
where the last inequality uses $count_{\tau}(\Pi) \geq c_{min} = e^{\varepsilon'} / (e^{\varepsilon'} - 1)$.
\end{proof}

\subsection{Proof of Lemma \ref{theta_lemma}}
\lowcount*
\begin{proof}
\begin{align}
    \Pr(\mathcal{A}_{PQ}(\Pi, \tau, \hat{\theta}, \varepsilon') = \top) 
        & = \Pr\left[count_{\tau_k}(\Pi) + Lap(4/\varepsilon') > \theta + (4/\varepsilon')\log(1/\delta') + Lap(2/\varepsilon')\right] \\
        & < \Pr\left[\theta + Lap(4/\varepsilon') > \theta + (4/\varepsilon')\log(1/\delta') + Lap(2/\varepsilon')\right] \\
        & = \Pr\left[Lap(4/\varepsilon') > (4/\varepsilon')\log(1/\delta') + Lap(2/\varepsilon')\right] \label{eqn:2_lap_ineq}
        % & < \delta'
\end{align}

Consider $u \sim Lap(4/\varepsilon')$ and $v \sim Lap(2/\varepsilon')$. For a fixed $v$, we have,
\begin{align}
    \Pr\left[u \geq (4/\varepsilon')\log(1/\delta') + v\right] \leq \frac{\delta'}{2}e^{-v (\varepsilon'/4)} \label{eqn:fixed_v}
\end{align}

Using the Laplace tail bound, we account for the randomness of $v$ from \eqref{eqn:2_lap_ineq}, \eqref{eqn:fixed_v} and the probability density function of the Laplace distribution, we have,
\begin{align}
    \Pr\left[Lap(4/\varepsilon') > (4/\varepsilon')\log(1/\delta') + Lap(2/\varepsilon')\right] &\leq \int_{-\infty}^{\infty} \left( \frac{\delta'}{2}e^{-v (\varepsilon'/4)} \times \frac{\varepsilon'}{4}e^{-|v| (\varepsilon'/2)} \right) dv \\
    & < \delta'
\end{align}
This gives the statement of the lemma.
\end{proof}

\subsection{Proof of Lemma \ref{combined_cmin_lemma}}
\combinedcmin*
\begin{proof}
We split this proof across 3 lemmas. Lemma \ref{lower_cmin_lemma} proves this bound for the case $count_{\tau_k}(\Pi) < c_{min}$ and $k < L$. Lemma \ref{upper_cmin_lemma} handles the case $count_{\tau_k}(\Pi) \geq c_{min}$ and $k < L$. Finally, Lemma \ref{full_tr_lemma} proves this for $k = L$. Combining these 3 gives the statement of Lemma \ref{combined_cmin_lemma}.
\end{proof}

\begin{lemma}\label{lower_cmin_lemma}
For any trajectory prefix $\tau_k, k < L$, and neighboring expert sets $\Pi, \Pi' \subseteq \Pi_{p_{min}}$, if $count_{\tau_k}(\Pi) < c_{min}$ then,
\[\Pr(\mathcal{A}(\Pi) = \tau_k) < \Pr(E_{\tau_k}| \Pi)\cdot\delta'\]
where $\delta'$ is as defined in Algorithm \ref{alg:data_release}.
\end{lemma}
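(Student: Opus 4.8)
The plan is to start from the action-decomposition of $\Pr(\mathcal{A}(\Pi) = \tau_k)$ already recorded in the proof sketch of Lemma~\ref{combined_cmin_lemma}. For $k < L$, writing $r = \top^k\bot \in \{\top,\bot\}^{k+1}$ and letting $\tau_k \cdot a$ denote the prefix $\tau_k$ extended by action $a$ at state $s_{k+1}$, we have
\begin{equation*}
\Pr(\mathcal{A}(\Pi) = \tau_k) = \sum_{a \in A} \Pr(\mathcal{A}_2(\Pi, Q^{\tau_k \cdot a}) = r)\, \Pr(E_{\tau_k \cdot a} \mid \Pi).
\end{equation*}
It therefore suffices to bound each factor $\Pr(\mathcal{A}_2(\Pi, Q^{\tau_k \cdot a}) = r)$ by $\delta'$ uniformly in $a$, and then resum the $\Pr(E_{\tau_k \cdot a} \mid \Pi)$ terms.

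The central step is to bound this Sparse Vector output probability. Since the query stream $Q^{\tau_k \cdot a}$ places the length-$k$ query $count_{\tau_k}$ in coordinate $k$, the output $r = \top^k\bot$ forces coordinate $k$ to be $\top$; hence the event $\{\mathcal{A}_2(\Pi, Q^{\tau_k \cdot a}) = r\}$ is contained in the event that $count_{\tau_k}(\Pi)$ exceeds the noisy threshold $\hat\theta$. This containment holds pointwise in the shared threshold noise $Lap(2/\epsilon')$ inside $\hat\theta$ and in the $k$-th query's independent $Lap(4/\epsilon')$ noise, so
\begin{equation*}
\Pr(\mathcal{A}_2(\Pi, Q^{\tau_k \cdot a}) = r) \leq \Pr(\mathcal{A}_{PQ}(\Pi, \tau_k, \hat\theta, \epsilon') = \top).
\end{equation*}
I would then verify the hypothesis of Lemma~\ref{theta_lemma} for the length-$k$ prefix: by assumption $count_{\tau_k}(\Pi) < c_{min}$, and since $p_{min} \in (0, 1/|A|)$ gives $\theta = c_{min}/p_{min} > c_{min}$, we indeed have $count_{\tau_k}(\Pi) < \theta$. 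Lemma~\ref{theta_lemma} then yields $\Pr(\mathcal{A}_{PQ}(\Pi, \tau_k, \hat\theta, \epsilon') = \top) < \delta'$, independent of $a$.

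Finally I would combine the pieces. Substituting the $\delta'$ bound gives $\Pr(\mathcal{A}(\Pi) = \tau_k) < \delta' \sum_{a \in A} \Pr(E_{\tau_k \cdot a} \mid \Pi)$, and the events $\{E_{\tau_k \cdot a}\}_{a \in A}$ partition $E_{\tau_k}$, since a trajectory with prefix $\tau_k$ takes exactly one (discrete) action at $s_{k+1}$ and these extensions are mutually exclusive; thus $\sum_{a \in A} \Pr(E_{\tau_k \cdot a} \mid \Pi) = \Pr(E_{\tau_k} \mid \Pi)$. This identity can also be checked directly from the closed form for $\Pr(E_{\cdot} \mid \Pi)$ in the proof of Lemma~\ref{count_lemma}, together with $\sum_{a \in A} count_{\tau_k \cdot a}(\Pi) = count_{\tau_k}(\Pi)$. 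This gives exactly $\Pr(\mathcal{A}(\Pi) = \tau_k) < \Pr(E_{\tau_k} \mid \Pi)\,\delta'$. The main obstacle is the second paragraph: getting the SVT containment right, namely recognizing that the bound comes from the $\top$ at coordinate $k$ (not the $\bot$ at coordinate $k+1$), and that the containment is pointwise in the noise so that Lemma~\ref{theta_lemma}, whose probability marginalizes over both the shared threshold and the per-query noise, applies verbatim. The remaining steps are routine bookkeeping.
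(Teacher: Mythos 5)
Your proposal is correct and follows essentially the same route as the paper's own proof: the identical action decomposition $\Pr(\mathcal{A}(\Pi) = \tau_k) = \sum_{a \in A} \Pr(\mathcal{A}_2(\Pi, Q^{\tau_k \cdot a}) = r)\Pr(E_{\tau_k \cdot a} \mid \Pi)$, the same containment bounding $\Pr(\mathcal{A}_2(\Pi, Q^{\tau_k\cdot a}) = r)$ by the probability of $\top$ on the $k$-th call to $\mathcal{A}_{PQ}$, then Lemma~\ref{theta_lemma} via $count_{\tau_k}(\Pi) < c_{min} < \theta$, and finally the resummation $\sum_{a\in A}\Pr(E_{\tau_k\cdot a}\mid\Pi) = \Pr(E_{\tau_k}\mid\Pi)$. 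The additional justifications you give (the event containment being pointwise in the shared and per-query noise, and the partition identity for the extensions $\tau_k \cdot a$) are correct elaborations of steps the paper states more tersely.
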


\begin{proof}
Let $r = \top \cdots \top \bot \in \{\top, \perp\}^{k+1}$
\begin{equation}
    \Pr(\mathcal{A}(\Pi) = \tau_k) = \sum_{a \in A}\left[\Pr(\mathcal{A}_2(\Pi, Q^{\tau_k\cdot a}) = r) \Pr(E_{\tau_k\cdot a} | \Pi)\right] \label{eqn:a_prob}
\end{equation}
where $\tau_k\cdot a$ denotes the trajectory prefix $\tau_k$ followed by action $a$.

$\mathcal{A}_2$ runs the Sparse Vector Algorithm using the the query sequence $Q^{\tau_k\cdot a}$ to output a sequence of the form $\top \cdots \top \bot$. Here, each $\top$ or $\bot$ is an output of $\mathcal{A}_{PQ}$ (Algorithm \ref{alg:prefix_query}). The probability that $\mathcal{A}_2$ outputs the sequence $r$ can be upper-bounded by the probability of getting $\top$ on the $k^{th}$ call to $\mathcal{A}_{PQ}$. Formally,
\begin{align}
    \Pr(\mathcal{A}_2(\Pi, Q^{\tau_k\cdot a}) = r)
    \leq \Pr(\mathcal{A}_{PQ}(\Pi, \tau_k, \hat{\theta}, \varepsilon') = \top) \label{eqn:pq_prob}
\end{align}

$\forall a$,
Using $count_\tau(\Pi) < c_{min} < \theta$, (since $\theta = c_{min} / p_{min}$), Lemma \ref{theta_lemma}, and \eqref{eqn:pq_prob},
\begin{align}
    \Pr(\mathcal{A}_2(\Pi, Q^{\tau_k\cdot a}) = r) \leq \delta' \label{eqn:a2_delta}
\end{align}
Combining \eqref{eqn:a_prob} and \eqref{eqn:a2_delta}, we get,
\begin{equation}
    \Pr(\mathcal{A}(\Pi) = \tau_k) < \delta' \sum_{a \in A} \Pr(E_{\tau_k\cdot a} | \Pi) = \delta' \Pr(E_{\tau_k} | \Pi)
\end{equation}
\end{proof}

\begin{lemma}\label{upper_cmin_lemma}
For any trajectory prefix $\tau_k = (s_1, a_1, ..., s_k, a_k, s_{k+1}), k < L$,  and neighboring expert sets $\Pi, \Pi' \subseteq \Pi_{p_{min}}$, if $count_{\tau_k}(\Pi) \geq c_{min}$ then,
\[\Pr(\mathcal{A}(\Pi) = \tau_k) \leq e^{2\varepsilon'}\Pr(\mathcal{A}(\Pi') = \tau_k) + \Pr(E_{\tau_k}| \Pi)\delta'\]
where $\varepsilon', \delta'$ are as defined in Algorithm \ref{alg:data_release}.
\end{lemma}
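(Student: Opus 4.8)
The plan is to start from the decomposition already recorded in the sketch of Lemma~\ref{combined_cmin_lemma}: since $k < L$, with $r = \top\cdots\top\bot \in \{\top,\bot\}^{k+1}$,
\[
\Pr(\mathcal{A}(\Pi) = \tau_k) = \sum_{a\in A}\Pr(\mathcal{A}_2(\Pi, Q^{\tau_k\cdot a}) = r)\,\Pr(E_{\tau_k\cdot a}\mid\Pi).
\]
I would then partition the action set as $A_{high} = \{a : count_{\tau_k\cdot a}(\Pi)\geq c_{min}\}$ and $A_{low} = \{a : count_{\tau_k\cdot a}(\Pi) < c_{min}\}$, exactly as suggested by the sketch, and bound the two resulting sums by the $e^{2\epsilon'}\Pr(\mathcal{A}(\Pi')=\tau_k)$ term and the $\delta'\Pr(E_{\tau_k}\mid\Pi)$ term respectively. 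I will also use the elementary identity $\sum_{a\in A}\Pr(E_{\tau_k\cdot a}\mid\Pi) = \Pr(E_{\tau_k}\mid\Pi)$, which holds because summing over the next action taken from $\tau_k$ recovers the event that the sampled trajectory has prefix $\tau_k$.

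For $a\in A_{high}$ I would multiply the two available privacy statements. The factor $\Pr(\mathcal{A}_2(\Pi, Q^{\tau_k\cdot a}) = r)$ is controlled by the fact that $\mathcal{A}_2$ is $(\epsilon',0)$-DP for a fixed query stream, giving $\Pr(\mathcal{A}_2(\Pi,\cdot)=r)\leq e^{\epsilon'}\Pr(\mathcal{A}_2(\Pi',\cdot)=r)$; the factor $\Pr(E_{\tau_k\cdot a}\mid\Pi)$ is controlled by Lemma~\ref{count_lemma} applied to the length-$(k+1)$ prefix $\tau_k\cdot a$, which is legitimate \emph{precisely because} $count_{\tau_k\cdot a}(\Pi)\geq c_{min}$, giving $\Pr(E_{\tau_k\cdot a}\mid\Pi)\leq e^{\epsilon'}\Pr(E_{\tau_k\cdot a}\mid\Pi')$. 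Multiplying yields the per-action bound $\Pr(\mathcal{A}_2(\Pi,Q^{\tau_k\cdot a})=r)\Pr(E_{\tau_k\cdot a}\mid\Pi)\leq e^{2\epsilon'}\Pr(\mathcal{A}_2(\Pi',Q^{\tau_k\cdot a})=r)\Pr(E_{\tau_k\cdot a}\mid\Pi')$, and summing over $a\in A_{high}$ (then extending to all of $A$, which only increases the right side since the summands are nonnegative) bounds this group by $e^{2\epsilon'}\Pr(\mathcal{A}(\Pi')=\tau_k)$.

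The main obstacle is the group $A_{low}$: there $count_{\tau_k\cdot a}(\Pi)$ is too small to invoke Lemma~\ref{count_lemma}, and the ratio $\Pr(E_{\tau_k\cdot a}\mid\Pi)/\Pr(E_{\tau_k\cdot a}\mid\Pi')$ can be unbounded (even infinite when the count drops to zero under $\Pi'$), so these terms cannot be matched to $\Pi'$. The resolution I would use leans on Assumption~\ref{ass:pmin}: since every expert places mass at least $p_{min}$ on action $a$, we have $count_{\tau_k\cdot a}(\Pi)\geq p_{min}\cdot count_{\tau_k}(\Pi)$, so $a\in A_{low}$ forces $count_{\tau_k}(\Pi) < c_{min}/p_{min} = \theta$. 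I would then observe that the SVT output $r$ requires a $\top$ at the $k$-th query, whose value is $count_{\tau_k}(\Pi)$; marginalizing over the shared threshold noise and discarding the constraints at the remaining positions only increases the probability, so $\Pr(\mathcal{A}_2(\Pi,Q^{\tau_k\cdot a})=r) \leq \Pr(\mathcal{A}_{PQ}(\Pi,\tau_k,\hat\theta,\epsilon')=\top)$. Because $count_{\tau_k}(\Pi)<\theta$, Lemma~\ref{theta_lemma} bounds this by $\delta'$. Hence each $A_{low}$ summand is at most $\delta'\Pr(E_{\tau_k\cdot a}\mid\Pi)$, and summing over $a\in A_{low}$ and applying the identity above bounds this group by $\delta'\Pr(E_{\tau_k}\mid\Pi)$. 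Adding the two groups gives the stated inequality.
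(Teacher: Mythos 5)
Your proof is correct and takes essentially the same route as the paper's: the identical decomposition over next actions, the identical partition into $count_{\tau_k\cdot a}(\Pi) \geq c_{min}$ versus $count_{\tau_k\cdot a}(\Pi) < c_{min}$, with Lemma~\ref{count_lemma} combined with the $(\epsilon',0)$-DP property of $\mathcal{A}_2$ handling the first group, and the $p_{min}$ argument (forcing $count_{\tau_k}(\Pi) < \theta$) combined with Lemma~\ref{theta_lemma} handling the second. If anything, your writeup is slightly more explicit than the paper's about why $\Pr(\mathcal{A}_2(\Pi,Q^{\tau_k\cdot a})=r) \leq \Pr(\mathcal{A}_{PQ}(\Pi,\tau_k,\hat{\theta},\epsilon')=\top)$ despite the shared threshold noise, and about the identity $\sum_{a\in A}\Pr(E_{\tau_k\cdot a}\mid\Pi) = \Pr(E_{\tau_k}\mid\Pi)$.
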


\begin{proof}
Let $r = \top \cdots \top \bot \in \{\top, \perp\}^{k+1}$.
\begin{align}
    \Pr(\mathcal{A}(\Pi) = \tau) = \sum_{a \in A}\left[\Pr(\mathcal{A}_2(\Pi, Q^{\tau\cdot a}) = r^{k+1}) \Pr(E_{\tau\cdot a} | \Pi)\right]
\end{align}
Let $A' = \{a: a \in A, count_{\tau_k\cdot a}(\Pi) \geq c_{min}\}, A'' = A \setminus A'$.\\
$\forall a \in A''$,
\begin{align}
    &count_{\tau_k \cdot a}(\Pi) = \sum_{i=1}^m [(\Pi_{j=1}^k \pi_i(a_j | s_j))\times \pi_i(a|s_{k+1})]
     \geq p_{min} \times \sum_{i=1}^m [\Pi_{j=1}^k \pi_i(a_j | s_j)] = p_{min} \times count_{\tau_k}(\Pi)
\end{align}
\begin{align}
    p_{min} \times &count_{\tau_k}(\Pi) \leq count_{\tau_k \cdot a}(\Pi) < c_{min} \\
    \Rightarrow \,\,&count_{\tau_k}(\Pi) < \frac{c_{min}}{p_{min}} = \theta \label{eqn1}
\end{align}

The output of $\mathcal{A}_2$ is the output of a sequence of calls to $\mathcal{A}_{PQ}$ (Algorithm \ref{alg:prefix_query}). The probability that $\mathcal{A}_2$ outputs the sequence $r$ can be upper-bounded by the probability of getting $\top$ on the $k^{th}$ call to $\mathcal{A}_{PQ}$.
Using this fact along with \eqref{eqn1} and Lemma \ref{theta_lemma}, we can say,
\begin{align}
    \Pr(\mathcal{A}_2(\Pi, Q^{\tau_k\cdot a}) &= r) \leq \Pr(\mathcal{A}_{PQ}(\Pi, \tau_k, \hat{\theta}, \varepsilon') = \top) < \delta' \label{eqn2}
\end{align}

\begin{align}
    \Pr(\mathcal{A}(\Pi) = \tau_k) &= \sum_{a \in A}\left[\Pr(\mathcal{A}_2(\Pi, Q^{\tau_k\cdot a}) = r) \Pr(E_{\tau_k\cdot a} | \Pi)\right] \\
    &= \sum_{a \in A'}\left[\Pr(\mathcal{A}_2(\Pi, Q^{\tau_k\cdot a}) = r) \Pr(E_{\tau_k\cdot a} | \Pi)\right] \nonumber \\
    & \qquad + \sum_{a \in A''}\left[\Pr(\mathcal{A}_2(\Pi, Q^{\tau_k\cdot a}) = r) \Pr(E_{\tau_k\cdot a} | \Pi)\right] \\
    &= \sum_{a \in A'}\left[\Pr(\mathcal{A}_2(\Pi, Q^{\tau_k\cdot a}) = r) \Pr(E_{\tau_k\cdot a} | \Pi)\right] \nonumber \\
    & \qquad + \delta' \sum_{a \in A''} \Pr(E_{\tau_k\cdot a} | \Pi) \qquad \text{[Using \eqref{eqn2}]} \\
    & \leq e^{2\varepsilon'}\sum_{a \in A'}\left[\Pr(\mathcal{A}_2(\Pi', Q^{\tau_k\cdot a}) = r) \Pr(E_{\tau_k\cdot a} | \Pi')\right] \nonumber \\
    & \qquad + \delta' \Pr(E_{\tau_k} | \Pi) \qquad \text{[Using: $\mathcal{A}_2$ is $(\varepsilon',0)$-DP and Lemma \ref{count_lemma}]} \\
    & \leq e^{2\varepsilon'}\sum_{a \in A}\left[\Pr(\mathcal{A}_2(\Pi', Q^{\tau_k\cdot a}) = r) \Pr(E_{\tau_k\cdot a} | \Pi')\right] + \delta' \Pr(E_{\tau_k} | \Pi) \\
    \Pr(\mathcal{A}(\Pi) = \tau_k) &\leq e^{2\varepsilon'}\Pr(\mathcal{A}(\Pi') = \tau_k) + \Pr(E_{\tau_k}| \Pi)\delta'
\end{align}
\end{proof}

\begin{lemma}\label{full_tr_lemma}
For any trajectory prefix $\tau_L$, where $L$ is the maximum length of a trajectory, and a pair of neighboring expert sets $\Pi, \Pi' \subseteq \Pi_{p_{min}}$, if $count_{\tau_L}(\Pi) < c_{min}$ then,
\begin{equation}
    \Pr(\mathcal{A}(\Pi) = \tau_L) < \Pr(E_{\tau_L}| \Pi)\cdot\delta'. 
\end{equation}
If $count_{\tau_L}(\Pi) \geq c_{min}$ then,
\begin{equation}
    \Pr(\mathcal{A}(\Pi) = {\tau_L}) \leq e^{2\varepsilon'} \Pr(\mathcal{A}(\Pi') = {\tau_L})
\end{equation}
\end{lemma}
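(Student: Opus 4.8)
The plan is to prove Lemma~\ref{full_tr_lemma} by specializing the decomposition used for $k < L$ to the case $k = L$, where the structure is simpler because a trajectory of maximal length has no successor action to sum over. First I would write down the exact form of $\Pr(\mathcal{A}(\Pi) = \tau_L)$. Since $\tau_L$ is a full-length trajectory, $\mathcal{A}_2$ must output the all-$\top$ string $\gamma^{(L)} = \top^L$ (there is no trailing $\bot$, because the loop in Algorithm~\ref{alg:data_release} terminates at $i = L$), and $\mathcal{A}_1$ must sample exactly $\tau_L$. Hence
\begin{equation*}
    \Pr(\mathcal{A}(\Pi) = \tau_L) = \Pr(\mathcal{A}_2(\Pi, Q^{\tau_L}) = \gamma^{(L)}) \cdot \Pr(E_{\tau_L} \mid \Pi).
\end{equation*}
This is the analogue of equations~\eqref{eqn:a_prob}, but without the sum over $a \in A$, since there is no extension of $\tau_L$ to condition on.

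For the first bound (the case $count_{\tau_L}(\Pi) < c_{min}$), I would argue exactly as in Lemma~\ref{lower_cmin_lemma}. The probability that $\mathcal{A}_2$ returns the all-$\top$ string is upper-bounded by the probability of obtaining $\top$ on the final ($L$-th) call to $\mathcal{A}_{PQ}$, i.e.\ $\Pr(\mathcal{A}_2(\Pi, Q^{\tau_L}) = \gamma^{(L)}) \leq \Pr(\mathcal{A}_{PQ}(\Pi, \tau_L, \hat\theta, \epsilon') = \top)$. Since $count_{\tau_L}(\Pi) < c_{min} < \theta$ (using $\theta = c_{min}/p_{min}$ with $p_{min} < 1$), Lemma~\ref{theta_lemma} gives that this probability is below $\delta'$, so that $\Pr(\mathcal{A}(\Pi) = \tau_L) < \delta' \cdot \Pr(E_{\tau_L} \mid \Pi)$, as required.

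For the second bound (the case $count_{\tau_L}(\Pi) \geq c_{min}$), the key observation is that since $k = L$ there is \emph{no} residual additive $\delta'$ term: in Lemma~\ref{upper_cmin_lemma} that term came entirely from the action set $A''$ of low-count extensions $\tau_k \cdot a$, which does not arise here. I would factor $\Pr(\mathcal{A}(\Pi) = \tau_L)$ as the product above, then bound the two factors separately: $\Pr(\mathcal{A}_2(\Pi, Q^{\tau_L}) = \gamma^{(L)}) \leq e^{\epsilon'} \Pr(\mathcal{A}_2(\Pi', Q^{\tau_L}) = \gamma^{(L)})$ because $\mathcal{A}_2$ is $(\epsilon', 0)$-DP on a fixed query stream, and $\Pr(E_{\tau_L} \mid \Pi) \leq e^{\epsilon'} \Pr(E_{\tau_L} \mid \Pi')$ by Lemma~\ref{count_lemma} (applicable since $count_{\tau_L}(\Pi) \geq c_{min}$). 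Multiplying these yields $\Pr(\mathcal{A}(\Pi) = \tau_L) \leq e^{2\epsilon'} \Pr(\mathcal{A}(\Pi') = \tau_L)$. The one point requiring care is the justification that $Q^{\tau_L}$ can be treated as a fixed query stream when applying the $(\epsilon', 0)$-DP guarantee of $\mathcal{A}_2$, while the dependence of the sampling on $\Pi$ is absorbed entirely into the separate $E_{\tau_L}$ factor; this clean product factorization is what makes the maximal-length case strictly tighter than the generic $k < L$ case, and I expect it to be the only subtle step.
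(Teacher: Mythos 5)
Your proposal is correct and follows essentially the same route as the paper's proof: the same product factorization $\Pr(\mathcal{A}(\Pi) = \tau_L) = \Pr(\mathcal{A}_2(\Pi, Q^{\tau_L}) = \gamma^{(L)})\,\Pr(E_{\tau_L}\mid\Pi)$, the same use of Lemma~\ref{theta_lemma} (via bounding the all-$\top$ output by the $L$-th call to $\mathcal{A}_{PQ}$) for the low-count case, and the same pairing of the $(\epsilon',0)$-DP guarantee of $\mathcal{A}_2$ on the fixed query stream with Lemma~\ref{count_lemma} to get the $e^{2\epsilon'}$ factor in the high-count case. The only difference is expository: you spell out the two $e^{\epsilon'}$ factors and the fixed-query-stream subtlety explicitly, which the paper leaves implicit.
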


\begin{proof}
Let $\gamma^{(L)}=\top^L$.
\begin{equation}\label{eqn:a_prob2}
    \Pr(\mathcal{A}(\Pi) = \tau_L) = \Pr(\mathcal{A}_2(\Pi, Q^\tau_L) = \gamma^{(L)}) \Pr(E_{\tau_L}|\Pi)
\end{equation}
If $count_{\tau_L}(\Pi) \geq c_{min}$, then using the fact that $\mathcal{A}_2$ is $(\varepsilon', 0)$-DP and Lemma \ref{count_lemma}, we can say that,
\begin{equation}\label{eqn3}
    \Pr(\mathcal{A}(\Pi) = \tau_L) \leq e^{2\varepsilon'} \Pr(\mathcal{A}(\Pi') = \tau_L)
\end{equation}

Recall that the output of $\mathcal{A}_2$ is the output of a sequence of calls to $\mathcal{A}_{PQ}$ (Algorithm \ref{alg:prefix_query}). The probability that $\mathcal{A}_2$ outputs the sequence $r$ can be upper-bounded by the probability of getting $\top$ on the $L^{th}$ call to $\mathcal{A}_{PQ}$. Formally,
\begin{equation}
    \Pr(\mathcal{A}_2(\Pi, Q^{\tau_L}) = \gamma^{(L)}) \leq \Pr(\mathcal{A}_{PQ}(\Pi, \tau_L, \hat{\theta}, \varepsilon') = \top) \label{eqn:pq_prob2}
\end{equation}

If $count_{\tau_L}(\Pi) < c_{min}$, then using $count_\tau(\Pi) < c_{min} < c_{min} / p_{min} = \theta$, \eqref{eqn:pq_prob2} and Lemma \ref{theta_lemma}, we get,

\begin{align}
    &\Pr(\mathcal{A}_2(\Pi, Q^{\tau_L}) = \gamma^{(L)}) \leq \Pr(\mathcal{A}_{PQ}(\Pi, \tau_L, \hat{\theta}, \varepsilon') = \top) < \delta' \\
    &\Pr(\mathcal{A}(\Pi) = \tau_L) \leq \Pr(E_{\tau_L}|\Pi) \delta' \qquad [\text{From }  \eqref{eqn:a_prob2}] \label{eqn4}
\end{align}

Combining \ref{eqn3} and \ref{eqn4}, we get the statement of the lemma.
\end{proof}

\subsection{Proof of Theorem \ref{itr_thm}}
\itrthm*

\begin{proof}
Let $\mathcal{T}$ denote the set of all possible trajectory prefixes using $S, A$.
From Lemma \ref{combined_cmin_lemma}, we have, $\forall \tau \in \mathcal{T}$, and all pairs of neighboring sets of experts $\Pi, \Pi' \subseteq \Pi_{p_{min}}$,
\begin{equation}\label{eqn:cmin_lemma}
    \Pr(\mathcal{A}(\Pi) = \tau) \leq e^{2\varepsilon'}\Pr(\mathcal{A}(\Pi') = \tau) + \Pr(E_\tau| \Pi)\cdot\delta'
\end{equation}
where $\delta' = \delta_1/(2LT)$.\\
To prove that $\mathcal{A}$ is $(2\varepsilon', \delta_1/(2T))$-differentially private, we need to prove the following for any pair of neighbouring $\Pi, \Pi' \subseteq \Pi_{p_{min}}$:\\
\begin{align}\label{eqn6}
    \sum_{\tau \in \mathcal{T}} \max\{\Pr(\mathcal{A}(\Pi) = \tau) - e^{2\varepsilon'} \Pr(\mathcal{A}(\Pi') = \tau), 0\} \leq \frac{\delta_1}{2T}
\end{align}

Note that, for any fixed $k$,
\begin{equation}
    \sum_{\tau \in \{\tau: \tau \in \mathcal{T}, |\tau| = k\}} \Pr(E_\tau| \Pi) = 1
\end{equation}
where $|\tau|$ denotes the length of the trajectory prefix $\tau$.
\begin{align}
    \sum_{\tau \in \mathcal{T}} \Pr(E_\tau| \Pi) = \sum_{k=1}^L \left[ \sum_{\tau \in \{\tau: \tau \in \mathcal{T}, |\tau| = k\}} \Pr(E_\tau| \Pi) \right] = L
\end{align}
where $L$ is the length of a full trajectory. From equation \ref{eqn:cmin_lemma}, we have,
\begin{align}
    \sum_{\tau \in \mathcal{T}} \max\{\Pr(\mathcal{A}(\Pi) = \tau) - e^{2\varepsilon'} \Pr(\mathcal{A}(\Pi') = \tau), 0\} &\leq \sum_{\tau \in \mathcal{T}} \Pr(E_\tau | \Pi)\delta' = L\delta' = \frac{\delta_1}{2T}
\end{align}
This proves Eqn \ref{eqn6}, and hence the statement of the theorem.

\end{proof}

\section{EXPERIMENTAL DETAILS}
\label{app:expdetails}

\subsection{Dataset Generation}\label{sec:app_data_gen}

We set $p_{min} = 0.02$ for all the environments.

\begin{figure*}[htbp]
    \centering
    \includegraphics[trim={15em 46em 6em 12.5em},clip,width=0.7\linewidth]{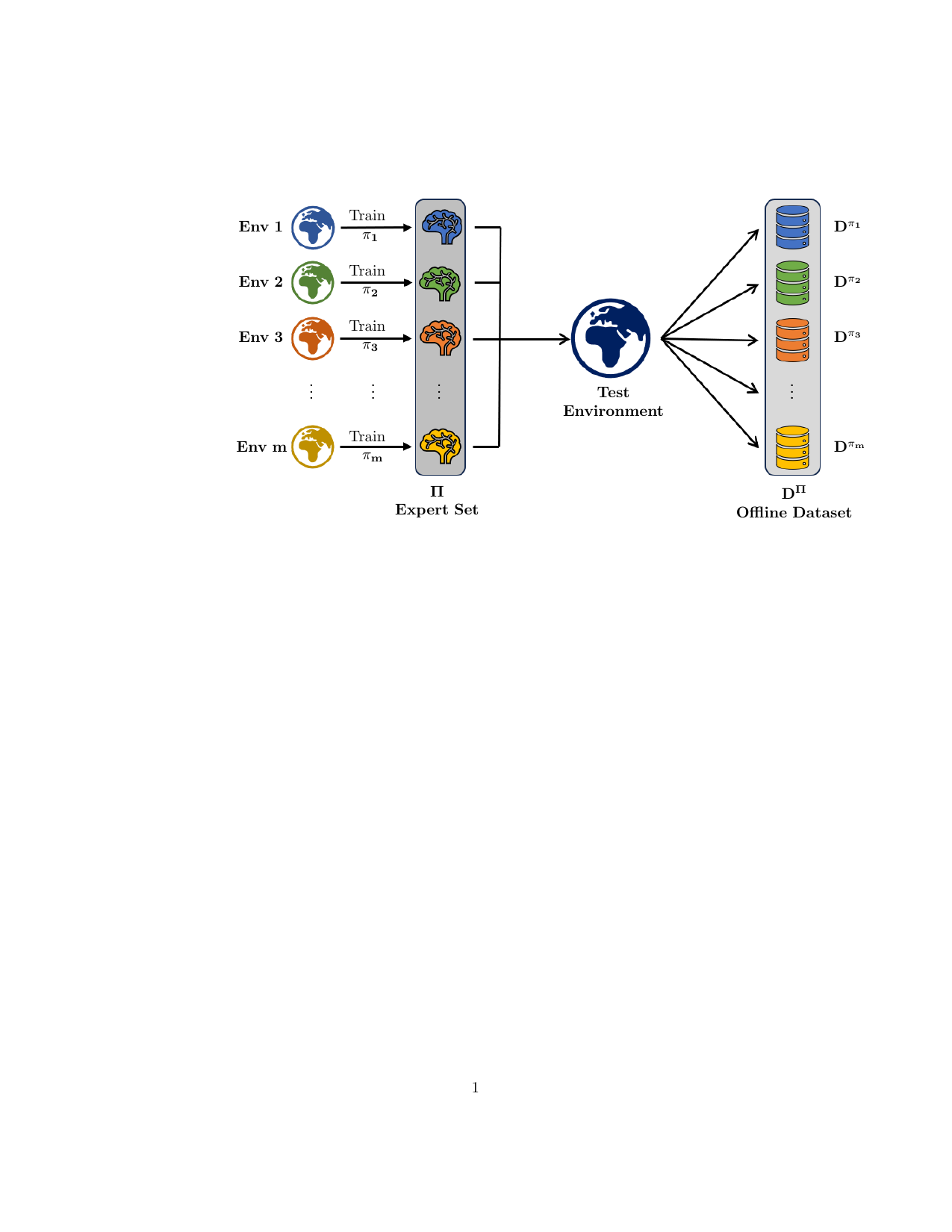}
    \caption{Outline of Data Generation scheme; We train experts on multiple environments as described below.}
    \label{fig:datagen}
\end{figure*}

% \subsubsection*{Lunar Lander}
\mypar{LunarLander} Among the various variable which govern the dynamics of this environment, we chose to modify gravity (default = 9.8), wind power (default = 0.0) and turbulence (default = 0.0) to train different experts. We modified gravity over 10 equally spaced values from 9.0 to 11.0, wind power over 10 equally spaced values between 0.0 to 5.0, and similarly turbulence over 10 equally spaced values from 0.0 to 0.5, to generate 1000 environment settings. For each of these environment settings, we trained agents using the PPO algorithm for $1e6$ updates along with hyperparameter tuning of learning rate and batch size. Each agent is a neural network with hidden-layers of sizes (512, 256, 64). For each environment setting, we chose the top 3 agents that performed the best on \emph{the corresponding environment}, to form our 3000 experts.

% \subsubsection*{CartPole}
\mypar{CartPole} For CartPole, we chose to vary gravity, the magnitude of the force exerted when the cart is pushed and the mass of the cart. The default values of gravity, force magnitude and mass of the cart are 9.8, 10.0 and 1.0 respectively. We varied each control over 10 equally spaced values over a fixed range. For gravity, we fixed this range as (8.75, 11.0), for force magnitude as (9.0, 11.25) and for mass of the cart as (0.8, 1.25). Similar to above, for each of these 1000 environment settings, we trained agents using the PPO algorithm for $1e6$ updates while tuning the learning rate and batch size. Each agent is a neural network with hidden-layers of sizes (512, 256, 64). We chose the top 3 agents from each setting to form our 3000 experts.

% \subsection*{Acrobot}
\mypar{Acrobot} For this environment, we varied the lengths of the 2 connected links and their masses. The default values of all these parameters is 1.0. We assign the same length to both the links in each setting. We vary this length, the mass of link 1 and the mass of link 2 over 10 equally spaced values across fixed ranges for each. We fix these ranges as (0.8, 1.2) for the length of the links, (0.9, 1.1) for the mass of link 1 and (0.9, 1.1) for the mass of link 2. The combination of these gives us 1000 environment settings. Again, we trained agents using PPO for $1e6$ updates over multiple learning rates and batch sizes and picked the top 3 agents from each setting to form our 3000 experts. Each agent is a neural network with hidden-layers of sizes (512, 256, 64).

% \subsection*{HIV Treatment}
\mypar{HIV Treatment} In this domain, the 6-dimensional continuous state represents the concentrations of various cells and viruses. The actions correspond to the dosages of two drugs. The reward is a function of the health outcomes of the patient, and it also penalises the side-effects due to the quantity of drugs prescribed. Hence, the default reward function has 2 penalty terms, a penalty of 20000 on the dosage of drug 1, and a penalty of 2000 on the dosage of drug 2 \citep{ernst2006clinical}. In the default setting of the environment, we assume that the treatment at each steps is applied for 20 days and that there are 50 time steps in an episode. To train experts, we vary the penalty on drug 1 over 50 equally spaced values from 18750 to 21250, and the penalty on drug 2 over 20 equally spaced values from 1850 to 2150. Another variation that we introduce is whether the efficacy of the administered drug is noisy or not. So we either add no noise to the drug efficacy, or we add gaussian noise with standard deviation 0.01. All these modifications give us 2000 different environment settings. We trained agents using PPO for $2e6$ updates on each of the environment settings and over multiple learning rates and batch sizes. Each agent is represented by a neural network with a single hidden layer of size 512. We then picked the top agents from each of the 2000 environment settings to create experts. We picked 1000 more top-performing agents from the 2nd best agents of all the environment settings to get 3000 experts in total.\\

Once we have chosen 3000 experts for any of the above environments, we modify each expert's policy so that for each state, the action with the highest probability gets $(1 - (|A|-1)\cdot p_{min})$ probability (where $|A|$ denotes the total number of actions), whereas all the other actions get $p_{min}$ probability of being executed. We then use these modified experts to sample 20 trajectories each from the default environment to form our offline dataset.

\subsection{Expert-Level DP-SGD}
We adapted DP-SGD to the expert-level privacy setting to form the baseline that we compare against. To achieve this, during gradient updates, we ensure that an expert contributes at most one transition to a batch. To construct a batch of size $b$, we randomly sample $b$ experts from the total $m$ experts, and then pick a transition from each selected expert. This is analogous to DP-SGD training on a dataset of size $m$. For privacy accounting, we use PLD accountant from the \texttt{tensorflow\_privacy} library.
\begin{algorithm}[htb]
\caption{Expert-Level DP-SGD (outline)}
\label{alg:expert_dpsgd}
\begin{algorithmic}[1]
    \Require Model Parameters $\phi$, Expert Datasets $D^\Pi$, Expert Policies $\Pi$, Loss function $\mathcal{L(\phi)} = \frac{1}{b}\sum_{i=1}^b \mathcal{L}(\phi, x_i)$, Learning Rate $\eta$, Batch size $b$, Noise parameter $\sigma$, Clipping threshold $C$, Training steps $N$
    \State Initialize $\phi_0$ randomly
    \For{$t = 1, \dots, N-1$}
        \State Sample subset of experts $B$ from $\Pi$ with subsampling probability $\frac{b}{m}$
        \State For each $\pi_i\in B$ sample transition $x=(s, a, r, s')$ from $D^{\pi_i}$ to form a batch of transitions $B_{trans}$
        \State For each $x_i\in B_{trans}$ compute gradients $g_i(\phi_t) \gets \nabla_{\phi_t} \mathcal{L}(\phi_t, x_i)$ 
        \State Scale gradients to fit clipping threshold $\bar{g}_i(\phi_t) \gets g_i(\phi_t)/\text{max}\left( 1, \frac{\norm{g_i(\phi_t)}_2}{C}\right)$ %\COMMENT{Clip the gradients}
        \State Add gaussian noise to gradients $\Tilde{g}_t \gets \frac{1}{b}\left(\sum_{i=1}^b\bar{g}_i(\phi_t) + \mathcal{N}(0, \sigma^2C^2\text{I})\right)$ %\COMMENT{Add Noise}
        \State Update model parameters $\phi_{t+1}\gets \phi_t - \eta \cdot \Tilde{g}_t$
    \EndFor
    \State \textbf{return} $\phi_T$
\end{algorithmic}
\end{algorithm}

Algorithm \ref{alg:expert_dpsgd} describes a general outline for expert-level DP-SGD which we use as a baseline to compare our method against. The loss function here corresponds to the offline RL algorithm. 

In our method, Algorithm \ref{alg:expert_dpsgd} is incorporated in the training scheme in Algorithm \ref{alg:selective_dpsgd} (lines 5 and 6) in the main paper. At each training step, with probability $p$, we execute lines 3-8 in Algorithm \ref{alg:expert_dpsgd} to train on a batch of transitions. Otherwise, we sample a batch of transitions from $\mathcal{D}_{stable}^\Pi$ and perform a normal SGD update (this has no privacy cost). The privacy analysis for Algorithm \ref{alg:selective_dpsgd} proceeds in a manner similar to that in standard DP-SGD \citep{abadi2016dp} with the exception of the subsampling probability which must be modified to be $q=p\cdot \frac{b}{m}$ (in the analysis only) to account for the randomness in picking the unstable dataset for sampling.

\subsection{Hyperparameter Tuning}
For all the environments, we assume that $p_{min} = 0.02$.
For LunarLander, CartPole and HIVTreatment, for $\varepsilon = 5$, we set $\varepsilon_1 = 0, \varepsilon_2 = \varepsilon, \delta_1 = 0, \delta_2 = \delta$ and resort to pure expert-level DP-SGD training since the privacy budget is not enough to get meaningful stable trajectories for training.
% Similarly for Acrobot, for $\varepsilon \leq 2.5$, we omit stable data collection and just do expert-level DP-SGD training.
For Acrobot, we set $\varepsilon_1 = \varepsilon, \varepsilon_2 = 0, \delta_1 = \delta, \delta_2 = 0$. For all other cases, we set $\varepsilon_1 = \frac{3\varepsilon}{4}, \varepsilon_2 = \frac{\varepsilon}{4}, \delta_1 = \frac{9\delta}{10}, \delta_2 = \frac{\delta}{10}$. 

We use Conservative Q-Learning (CQL) \citep{kumar2020conservative} and Discrete Batch-Constrained Deep Q-Learning (BCQ) \citep{fujimoto2019benchmarking} for training on the offline data. For Lunar Lander, Acrobot and CartPole, we fix all neural network hidden layer sizes to (256, 256). For HIV Treatment, we use a neural network of one hidden layer of size 512 to represent the Q-function.
For DP-SGD updates, we clip all gradient norms to 1.0.
For Acrobot, for $\varepsilon =$ 10, 7.5, 5, we set $T$ (defined in Algorithm \ref{alg:data_release}) equal to 25, 20, 20 respectively. For LunarLander, for $\varepsilon =$ 10, 7.5, we set $T$ equal to 100 and 62 respectively. For CartPole, for $\varepsilon =$ 10, 7.5, we set $T$ equal to 25 and 20 respectively. We use $T = 50$ for HIVTreatment.

We tune the following hyperparameters for our method: learning rate $\eta$, batch size $b$, probability of sampling from $\mathcal{D}_{unst}^\Pi$ during training ($p$), and DP-SGD noise multiplier ($s$) which is the standard deviation of the gaussian noise applied to the gradients divided by the clipping threshold. We perform a grid search over all hyper-parameters. The search spaces for different hyperparameters are as follows: $\eta$ - [0.0001, 0.0005, 0.001, 0.005, 0.01], $b$ - [64, 128, 256], $p$ - [0.9, 0.8, 0.5, 0.0], and $s$ - [10.0, 20.0, 30.0, 40.0, 50.0, 80.0]. For each configuration of hyperparameters, we let the model train for as long as possible with the given value of $\varepsilon_2$. We report the average episodic return obtained over 10 evaluation runs spaced over the last 10000 steps of training. The best hyper-parameter configurations for each setting are given in Table \ref{tab:hparams}.

\begin{table}[htbp]
\setlength{\tabcolsep}{9pt}
\caption{Best hyper-parameter settings across different environments, offline-RL algorithms and $\varepsilon$ values.}
\label{tab:hparams}
\begin{center}
\begin{tabular}{cccccccccc}
\toprule
\multirow{2}{*}{\textbf{Environment}}  & \multirow{2}{*}{$\mathbf{\varepsilon}$} & \multicolumn{4}{c}{\textbf{CQL}} & \multicolumn{4}{c}{\textbf{BCQ}} \\
\cmidrule(lr){3-6} \cmidrule(lr){7-10}
             &         & $\eta$ & $b$ & $p$   & $s$ & $\eta$ & $b$ & $p$   & $s$ \\
\midrule
\multirow{3}{*}{\textbf{Acrobot}}      & 5       & 0.001         & 128        & 0.0 & -     & 0.01          & 256        & 0.0 & -     \\
             & 7.5     & 0.001         & 128        & 0.0 & -     & 0.01          & 256        & 0.0 & -     \\
             & 10      & 0.0005        & 128        & 0.0 & -     & 0.01          & 256        & 0.0 & -     \\
\midrule
\multirow{3}{*}{\textbf{LunarLander}}  & 5       & 0.0005        & 256        & 1.0 & 80               & 0.005         & 64         & 1.0 & 10               \\
             & 7.5     & 0.005         & 64         & 0.8 & 20               & 0.001         & 128        & 0.5 & 50               \\
             & 10      & 0.0005        & 64         & 0.8 & 40               & 0.005         & 128        & 0.5 & 20               \\
\midrule
\multirow{3}{*}{\textbf{CartPole}}      & 5       & 0.005         & 256        & 1.0 & 20               & 0.005         & 128        & 1.0 & 10               \\
             & 7.5     & 0.001         & 128        & 0.9 & 50               & 0.01          & 128        & 0.5 & 20               \\
             & 10      & 0.0001        & 128        & 0.8 & 80               & 0.001         & 128        & 0.5 & 40               \\
\midrule
\multirow{3}{*}{\textbf{HIVTreatment}}  & 5.0     & 0.01          & 256        & 1.0 & 10               & 0.005         & 128        & 1.0 & 20               \\
             & 7.5     & 0.0001        & 64         & 0.0 & -     & 0.005         & 64         & 0.9 & 20               \\
             & 10      & 0.0001        & 64         & 0.0 & -     & 0.005         & 64         & 0.9 & 20 \\
\bottomrule
\end{tabular}
\end{center}
\end{table}

\end{document}